\theoremstyle{plain}
\newtheorem{mythe}{Theorem}
\theoremstyle{remark}
\theoremstyle{plain}
\theoremstyle{remark}
\newtheorem{mypro}{Proposition}
\theoremstyle{plain}
\theoremstyle{remark}
\newtheorem{myrem}{Remark}
\theoremstyle{remark}
\theoremstyle{remark}
\theoremstyle{remark}
\theoremstyle{remark}
\theoremstyle{remark}
\def\UrlSpecials{\do\~{\kern -.15em\lower .7ex\hbox{~}\kern .04em}} \catcode`~=13
\newcommand{\calA}{\mathcal{A}}
\newcommand{\calC}{\mathcal{C}}
\newcommand{\calH}{\mathcal{H}}
\newcommand{\calN}{\mathcal{N}}
\newcommand{\calQ}{\mathcal{Q}}
\newcommand{\bA}{\mathbf{A}}
\newcommand{\bg}{\mathbf{g}}
\newcommand{\bh}{\mathbf{h}}
\newcommand{\bH}{\mathbf{H}}
\newcommand{\bI}{\mathbf{I}}
\newcommand{\bs}{\mathbf{s}}
\newcommand{\bt}{\mathbf{t}}
\newcommand{\bT}{\mathbf{T}}
\newcommand{\bu}{\mathbf{u}}
\newcommand{\bU}{\mathbf{U}}
\newcommand{\bv}{\mathbf{v}}
\newcommand{\bV}{\mathbf{V}}
\newcommand{\bw}{\mathbf{w}}
\newcommand{\bx}{\mathbf{x}}
\newcommand{\by}{\mathbf{y}}
\newcommand{\bz}{\mathbf{z}}
\newcommand{\bbE}{\mathbb{E}}
\DeclareMathAlphabet{\mathbsf}{OT1}{cmss}{bx}{n}
\DeclareMathAlphabet{\mathssf}{OT1}{cmss}{m}{sl}
\DeclareSymbolFont{bsfletters}{OT1}{cmss}{bx}{n}
\DeclareSymbolFont{ssfletters}{OT1}{cmss}{m}{n}
\DeclareMathSymbol{\bsfGamma}{0}{bsfletters}{'000}
\DeclareMathSymbol{\ssfGamma}{0}{ssfletters}{'000}
\DeclareMathSymbol{\bsfDelta}{0}{bsfletters}{'001}
\DeclareMathSymbol{\ssfDelta}{0}{ssfletters}{'001}
\DeclareMathSymbol{\bsfTheta}{0}{bsfletters}{'002}
\DeclareMathSymbol{\ssfTheta}{0}{ssfletters}{'002}
\DeclareMathSymbol{\bsfLambda}{0}{bsfletters}{'003}
\DeclareMathSymbol{\ssfLambda}{0}{ssfletters}{'003}
\DeclareMathSymbol{\bsfXi}{0}{bsfletters}{'004}
\DeclareMathSymbol{\ssfXi}{0}{ssfletters}{'004}
\DeclareMathSymbol{\bsfPi}{0}{bsfletters}{'005}
\DeclareMathSymbol{\ssfPi}{0}{ssfletters}{'005}
\DeclareMathSymbol{\bsfSigma}{0}{bsfletters}{'006}
\DeclareMathSymbol{\ssfSigma}{0}{ssfletters}{'006}
\DeclareMathSymbol{\bsfUpsilon}{0}{bsfletters}{'007}
\DeclareMathSymbol{\ssfUpsilon}{0}{ssfletters}{'007}
\DeclareMathSymbol{\bsfPhi}{0}{bsfletters}{'010}
\DeclareMathSymbol{\ssfPhi}{0}{ssfletters}{'010}
\DeclareMathSymbol{\bsfPsi}{0}{bsfletters}{'011}
\DeclareMathSymbol{\ssfPsi}{0}{ssfletters}{'011}
\DeclareMathSymbol{\bsfOmega}{0}{bsfletters}{'012}
\DeclareMathSymbol{\ssfOmega}{0}{ssfletters}{'012}
\newcommand{\tila}{\widetilde{a}}
\newcommand{\hatc}{\widehat{c}}
\newcommand{\tilH}{\widetilde{H}}
\newcommand{\tilbh}{\widetilde{\bh}}
\newcommand{\tilbH}{\widetilde{\bH}}
\newcommand{\tilP}{\widetilde{P}}
\newcommand{\hats}{\widehat{s}}
\newcommand{\tils}{\widetilde{s}}
\newcommand{\tilbT}{\widetilde{\bT}}
\newcommand{\tilu}{\widetilde{u}}
\newcommand{\tilbu}{\widetilde{\bu}}
\newcommand{\tilx}{\widetilde{x}}
\newcommand{\hatbx}{\widehat{\bx}}
\newcommand{\tilbx}{\widetilde{\bx}}
\newcommand{\tily}{\widetilde{y}}
\newcommand{\tilby}{\widetilde{\by}}
\newcommand{\tilbz}{\widetilde{\bz}}
\newcommand{\barx}{\bar{x}}
\newcommand{\barP}{\bar{P}}
\newcommand{\bLambda}{\bm{\Lambda}}
\def\norm#1{\left\| #1 \right\|}
\def\norm2#1{\left\| #1 \right\|_2}
\def\norm22#1{\left\| #1 \right\|_2^2}
\newcommand{\eqa}{\stackrel{(a)}{=}}
\DeclareMathOperator{\diag}{diag}
\newcommand{\qednew}{\nobreak \ifvmode \relax \else
      \ifdim\lastskip<1.5em \hskip-\lastskip
      \hskip1.5em plus0em minus0.5em \fi \nobreak
      \vrule height0.75em width0.5em depth0.25em\fi}
\newcommand{\black}[1]{{{\color{black} #1}}}
\title{Cooperative Ambient Backscatter Communications for Green Internet-of-Things} 
\author{Gang~Yang, \emph{Member, IEEE}, Qianqian~Zhang, and Ying-Chang~Liang, \emph{Fellow, IEEE}
\thanks{G.~Yang and Q. Zhang are with the National Key Laboratory of Science and Technology on Communications, University of Electronic Science and Technology of China (UESTC), Chengdu, P. R. China (e-mails: yanggang@uestc.edu.cn; qqzhang\_kite@163.com).}
\thanks{Y.-C.~Liang is with the Center for Intelligent Networking and Communications, University of Electronic Science and Technology of China (UESTC), Chengdu, P. R. China (e-mail: liangyc@ieee.org).}
\thanks{A preliminary version of this paper was presented in \cite{YangLiangZhangICC17}.}}
\begin{document}
\maketitle 

\begin{abstract}
Ambient backscatter communication (AmBC) enables a passive backscatter device to transmit information to a reader using ambient RF signals, and has emerged as a promising solution to green Internet-of-Things (IoT). Conventional AmBC receivers are interested in recovering the information from the ambient backscatter device (A-BD) only. In this paper, we propose a cooperative AmBC (CABC) system in which the reader recovers information not only from the A-BD, but also from the RF source. We first establish the system model for the CABC system from spread spectrum and spectrum sharing perspectives. Then, for flat fading channels, we derive the optimal maximum-likelihood (ML) detector, suboptimal linear detectors as well as successive interference-cancellation (SIC) based detectors. For frequency-selective fading channels, the system model for the CABC system over ambient orthogonal frequency division multiplexing (OFDM) carriers is proposed, upon which a low-complexity optimal ML detector is derived. For both kinds of channels, the bit-error-rate (BER) expressions for the proposed detectors are derived in closed forms. Finally, extensive numerical results have shown that, when the A-BD signal and the RF-source signal have equal symbol period, the proposed SIC-based detectors can achieve near-ML detection performance for typical application scenarios, and when the A-BD symbol period is longer than the RF-source symbol period, the existence of backscattered signal in the CABC system can enhance the ML detection performance of the RF-source signal, thanks to the beneficial effect of the backscatter link when the A-BD transmits at a lower rate than the RF source.
\end{abstract}

\begin{keywords}
Cooperative ambient backscatter communication (CABC), cooperative receiver, maximum-likelihood (ML) detection, successive interference cancellation (SIC), performance analysis, orthogonal frequency division multiplexing (OFDM), multi-antenna systems.
\end{keywords}

\vspace{-0.2cm}
\section{Introduction}\label{introduction}
Ambient backscatter communication (AmBC) enables ambient backscatter devices (A-BDs) to modulate their information symbols over the ambient RF signals (e.g., cellular, TV or WiFi signals) without using a complex RF transmitter~\cite{ABCSigcom13}. On the other hand, compared to traditional backscatter communication systems such as radio-frequency identification (RFID) systems~\cite{Dobkinbook2007, BoyerSumit14}, AmBC does not require the reader to transmit a high power RF sinusoidal carrier to the backscatter device. Thus, AmBC is a promising solution to Internet-of-Things (IoT)~\cite{Ishizaki11} with stringent cost, power, and complexity constraints, and has drawn significant interest from both academia and industry recently.

One of the key challenges in the receiver design for AmBC is to tackle the direct-link interference from the ambient RF source. Some existing methods treat the direct-link interference as part of the background noise~\cite{ABCSigcom13, WiFiBackscatter14,QianGaoAmBCTWC16, WangGaoAmBCTCOM16}. In \cite{ABCSigcom13} and \cite{WiFiBackscatter14}, energy detectors are used to detect the A-BD symbols. In~\cite{QianGaoAmBCTWC16} and~\cite{WangGaoAmBCTCOM16}, maximum-likelihood (ML) detection is proposed for differential modulation. Because of the double-attenuation in the backscatter link, the above proposed detection schemes suffer from severe performance degradation due to the strong direct-link interference. Recently, interference cancellation techniques have been applied to the receiver design for AmBC~\cite{YangLiangGC16, YangLiangZhangPei17, TurbochargingABCSigcom14, BackFiSigcom15,HitchHikeKattiSenSys16}. In~\cite{YangLiangGC16} and~\cite{YangLiangZhangPei17}, the direct-link interference is cancelled out by exploiting the repeating structure of the ambient orthogonal frequency division multiplexing (OFDM) signals. In~\cite{TurbochargingABCSigcom14}, two receive antennas are used at the reader to cancel out the effect of the RF-source signal by calculating the ratio of the amplitudes of the signals received at the two antennas. In \cite{BackFiSigcom15}, a WiFi backscatter system is proposed in which the WiFi access point (AP) detects the received signal backscattered from the A-BD while simultaneously transmitting WiFi packages to a standard WiFi client. This design relies on the self-interference cancellation technique developed for full-duplex communication.

There are other studies on AmBC addressing the problem of direct-link interference~\cite{PassiveWiFiNSDI16,InterscatterSigcom16, FSBackscatterSigcomm16}. A passive WiFi system is proposed in~\cite{PassiveWiFiNSDI16} which requires a dedicated device to transmit RF sinusoidal carrier at a frequency that lies outside the desired WiFi channel, such that the WiFi receiver can suppress the resulting out-of-band (direct-link) carrier interference. An inter-technology backscatter system is proposed in~\cite{InterscatterSigcom16}, which transforms wireless transmissions from one technology (e.g., Bluetooth) to another (e.g., WiFi) in the air. The A-BD creates frequency shifts on a single side of the carrier by using complex impedance of its backscatter circuit, so as to suppress the direct-link interference. A frequency-shifted backscatter (FS-Backscatter) system is proposed in~\cite{FSBackscatterSigcomm16} for on-body sensor applications, which suppresses the direct-link interference by shifting the backscattered signal to a clean band that does not overlap with the direct-link signal. Similarly, an FM backscatter system is proposed in~\cite{WangSmithFMBackscatter17} which uses ubiquitous FM signals as RF source and shifts the backscattered signal to another unoccupied FM channel to suppresses the direct-link interference. However, additional spectrum is needed for the above AmBC systems.



Existing studies focus on designing receivers to recover only the information from the A-BD, while treating the signal from the RF source as unwanted interference. For some AmBC systems like WiFi Backscatter~\cite{WiFiBackscatter14}, BackFi~\cite{BackFiSigcom15} and HitchHike~\cite{HitchHikeKattiSenSys16}, the A-BD information is recovered by commodity devices such as smart phones and WiFi APs. In fact, the commodity device can simultaneously recover the A-BD information through backscatter communication, when it is receiving information from the RF source. {{\textcolor{black}{Two typical application examples are described as follows, a smartphone simultaneously recovers information
either from both a home WiFi AP and a domestic sensor for smart-home applications, or from both a cellular base station and a wearable sensor for body-area-network applications.}}} In this paper, we propose a cooperative AmBC (CABC) system, for which a novel receiver, called cooperative receiver (C-RX), is designed to recover information from both the RF source and the A-BD. \black{The cooperation in the CABC system also lies in the fact that the backscattering A-BD acts as a (passive) relay to assist the recovery of RF-source information at the C-RX.} We are interested in the receiver design and performance analysis for such CABC system. The main contributions of this paper are summarized as follows:
\begin{itemize}
  \item First, we establish the system model for the CABC system with multiple receive antennas under flat fading channels. Since the received signal at the C-RX contains the direct-link signal from the ambient RF source and the backscatter-link signal from the A-BD, and the backscatter-link signal is the multiplication of the RF-source signal and the A-BD signal, both spread spectrum and spectrum sharing perspectives are incorporated into the system modelling.
  \item Then, the optimal ML detector is proposed for the C-RX of CABC system. We also propose suboptimal linear detectors and successive interference-cancellation (SIC) based detectors, by exploiting the structural property of the system model. For SIC-based detectors, the C-RX first detects the RF-source signal, then subtracts its resultant direct-link interference from the received signal, and recovers the A-BD signal. Finally, based on the recovered A-BD signal, the C-RX re-estimates the RF-source signal.
  \item We also investigate the receiver design for the CABC system over ambient OFDM carriers under frequency-selective fading channels. We choose the A-BD symbol period to be the same as the OFDM symbol period, and develop a low-complexity optimal ML detector for the C-RX.
  \item We obtain the bit-error-rate (BER) expressions in closed forms for the proposed detectors, under both flat fading channels and frequency-selective fading channels.
  \item Extensive numerical results have shown that when the A-BD signal and the RF-source signal have equal symbol period, the proposed SIC-based detector can achieve near-ML detection performance when the backscattered signal power is lower than the direct-link signal power. When the A-BD symbol period is longer than the RF-source symbol period, the existence of backscattered signal in the CABC system can significantly enhance the ML detection performance of the RF-source signal, compared to conventional single-input-multiple-output (SIMO) communication systems without an A-BD, thanks to the beneficial effect of the backscatter link when the A-BD transmits at a low rate than the ambient RF source. Also, for frequency-selective fading channels, the proposed detector is shown to be robust against the typically very small time delay between the arrival of direct-link signal and the backscatter-link signal at the C-RX.
\end{itemize}

The rest of this paper is organized as follows: Section~\ref{sec:model_flat} establishes the system model for the CABC system under flat fading channels. Section~\ref{sec:receiver_flat} derives the optimal ML detector, linear detectors and SIC-based detectors for CABC under flat fading channels. Section~\ref{sec:OFDM} first establishes the system model for CABC system over ambient OFDM carriers under frequency-selective fading channels, then derives the low-complexity optimal ML detector. Section~\ref{sec:ber_analysis} analyzes the BER performance of CABC systems with various proposed detectors. Section~\ref{sec:simulation} provides numerical results which evaluate the performance of the proposed detectors. Finally, Section~\ref{conslusion} concludes this paper.

The main notations in this paper are listed as follows: The lowercase, boldface lowercase, and boldface uppercase letter  $t$, $\bt$, and $\bT$ denotes a scalar variable (or constant), vector, and matrix, respectively. $|t|$ means the operation of taking the absolute value. $\|\bt\|$ denotes the norm of vector $\bt$. $\calC \calN(\mu, \sigma^2)$ denotes the circularly symmetric complex Gaussian (CSCG) distribution with mean $\mu$ and variance $\sigma^2$. $\bbE[t]$ denotes the statistical expectation of $t$. $t^{\ast}$ means the conjugate of $t$. $\bT^T$ and $\bT^H$ denotes the transpose and conjugate transpose of the matrix $\bT$, respectively. $\textrm{Re}\{ \cdot\}$ and $\textrm{Im}\{ \cdot\}$ denotes the real-part operation and the imaginary-part operation, respectively.

\section{System Model For CABC System Under Flat Fading Channels}\label{sec:model_flat}
In this section, we first describe the proposed CABC system, then establish its system model under flat fading channels.

\vspace{-0.2cm}
\subsection{System Description} \label{SystemDescription}
Fig.~\ref{fig:Fig1} illustrates the system model of the proposed CABC system, which consists of a single-antenna RF source (e.g., TV tower, WiFi AP), a single-antenna A-BD, and a C-RX equipped with $M \ (M \geq 1)$ antennas. The A-BD transmits its modulated signals to the C-RX over the ambient RF carrier. \black{The proposed CABC system is termed as a cooperative system, due to two facts: (1)  the C-RX needs to recover the information from two users, i.e., both the RF source and the A-BD, (2) the backscattering A-BD acts as a (passive) relay to assist the detection of RF-source signal at the C-RX, which will be verified in the sequel of this paper.}

The A-BD contains a single backscatter antenna, a backscatter transmitter (i.e., a switched load impedance), a micro-controller, a memory, a rechargeable battery replenished by an energy harvester, and a signal processor. The energy harvester collects energy from ambient signals and uses it to replenish the battery which provides power for all modules of the A-BD. To transmit information bits stored in the memory to the C-RX, the A-BD modulates its received ambient carrier by intentionally switching the load impedance to change the amplitude and/or phase of its backscattered signal, and the backscattered signal is received and finally decoded by the C-RX. Also, the A-BD antenna can be switched to the signal processor which is able to perform information decoding and other simple signal processing operations such as sensing and synchronization.

\begin{figure}[!t]
\centering
\includegraphics[width=.99\columnwidth] {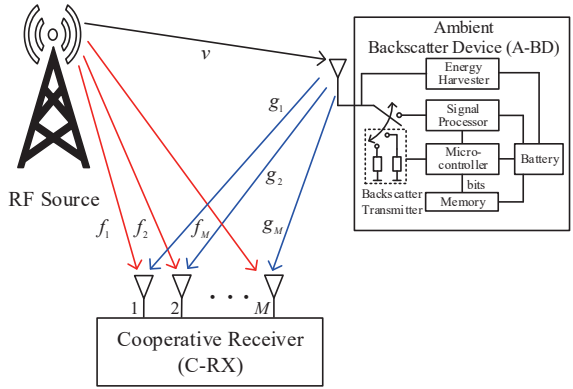}
\caption{System model for a cooperative AmBC system.}
\label{fig:Fig1}
\end{figure}

\vspace{-0.2cm}
\subsection{Signal Model} \label{SignalModel_flat}
In this subsection, we establish the signal model for the proposed CABC system under flat fading channels. The signal model under frequency-selective channels will be discussed in Section~\ref{subsec:signal_ofdm}.


{{\textcolor{black}{Let $R_{\sf s}$ and $R_{\sf c}$ be the symbol rates for the RF-source signal and A-BD signal, respectively. Without loss of generality, we assume $R_{\sf s}= K R_{\sf c}$ with $K$ being a positive integer}}}, since the A-BD data rate is typically smaller than the source data rate~\cite{ABCSigcom13, WiFiBackscatter14,BackFiSigcom15,QianGaoAmBCTWC16}. That is to say, the A-BD symbol $c(n)$ covers $K$ RF-source symbols, denoted as ${\mathbf s} (n) = [s_0(n),\ldots, s_{K-1}(n)]^T \in \calC^{K}$. Let $\calA_{\sf s}$ and $\calA_{\sf c}$ be the modulation alphabet sets of the RF source and the A-BD, respectively. We assume that the A-BD symbol period is smaller than the coherence time of fading channels, and the A-BD symbols synchronize with the RF-source symbols. {{\textcolor{black}{Denote the RF-source's symbol period $T_{\sf s}=1/R_{\sf s}$ and the A-BD's symbol period $T_{\sf c}=1/R_{\sf c}$.}}}

Block fading channel models are assumed. As shown in Fig.~\ref{fig:Fig1}, for the interested block, denote $v$ as the channel coefficient between the RF source and the A-BD, $g_m$ as the channel coefficient between the A-BD and the $m$-th receive antenna, for $m=1, \ \ldots, \ M$, at the C-RX, and $f_m$ as the channel coefficient between the RF source and the $m$-th receive antenna at the C-RX, respectively. We also denote $\bh_1 =[f_1, \ \ldots, \ f_M]^T$ and $\bg =[g_1, \ \ldots, \ g_M]^T$.


Denote the average transmit power at the RF source as $P_{\sf s}$. Let $\alpha$ be the reflection coefficient of the A-BD, \black{which is typically a small (complex) number with absolute value less than 1}, and $c(t)$ be the baseband signal of the A-BD. The backscattered signal\footnote{\black{From the antenna scatterer theorem, the EM field backscattered from the antenna of the A-BD consists of antenna-mode scattering component which relates to re-radiation of closed-circuited antenna and depends on the chip impedance of the A-BD, and the structure-mode scattering one which relates to the scattering from an open-circuited antenna and is load-independent~\cite{Dobkinbook2007}.}} out of the A-BD in baseband form is $\alpha v \sqrt{P_{\sf s}} s(t) c(t)$. \black{Based on such a model for the backscattered signal, the AmBC receivers are implemented in literature~\cite{ABCSigcom13, TurbochargingABCSigcom14,BackFiSigcom15}, and such model is also adopted in recent theoretical work on AmBC~\cite{QianGaoAmBCTWC16,WangGaoAmBCTCOM16,DarsenaVerdeTCOM17}.} Such operation in the A-BD is termed ``modulation in the air'' in~\cite{YangLiangGC16}.

In the $n$-th A-BD symbol period, for $n = 1, 2, \ldots$, the signal received at the $m$-th antenna of the C-RX can be written as
\begin{align}\label{eq:rx_sig}
  y_{m, k} (n) \!=\! f_m \! \sqrt{P_{\sf s}} s_k(n) \!+\! \alpha v g_m \! \sqrt{P_{\sf s}} s_k(n) c(n) \!+\! u_{m, k}(n),
\end{align}
for $m = 1, \ldots, M$, $k=0,\ldots,K-1$, where $u_{m, k}(n) \sim \calC \calN(0, \sigma^2)$. It is assumed that the noises $u_{m, k} (n)$'s are independent of the signals $s_k(n)$'s and $c(n)$'s.

\begin{myrem}
\black{Strictly speaking, the arrival of the backscatter-link signal from the A-BD (i.e., the second term in~\eqref{eq:rx_sig}) at the C-RX is typically delayed by a time $\tau$ ($\tau \geq 0$), compared to the arrival of the direct-link signal from the RF source (i.e., the first term in~\eqref{eq:rx_sig}). However, such a delay is typically negligible in most application scenarios, because of the following facts: (i) the A-BD transmits information to nearby C-RX, and the typical A-BD-to-C-RX distance is less than 10 feet~\cite{ABCSigcom13, WiFiBackscatter14}; (ii) the A-BD symbol period is typically much longer than the propagation delay of the A-BD-to-C-RX channel, since the low-cost and low-power A-BD supports only low-rate backscattering operation~\cite{BackFiSigcom15}. For instance, the propagation delay for a A-BD-to-C-RX distance of 3 meters is 10 ns, and this is much shorter than 1 microsecond which corresponds to a A-BD symbol rate up to 1 Mbps.}
\end{myrem}

\begin{myrem}
\black{For some extreme application scenarios in which the time delay $\tau$ is not negligible, the RF-source signal can be viewed to propagate through a frequency-selective fading channel with two paths equivalently (i.e., the direct-link path $f_m$ and the backscatter-link path $\alpha v g_m$ with additional delay $\tau$). Therefore, the RF source can adopt OFDM modulation to combat the frequency-selective fading, and the C-RX can use the detector proposed in Section~\ref{sec:OFDM} to eliminate the effect of the delay $\tau$. We thus assume that the delay $\tau$ is zero in this section.}
\end{myrem}


For convenience, we define the average receive SNRs of the direct link and the backscatter link as $\gamma_{\sf d} \triangleq \frac{P_{\sf s}\beta_{\sf f}}{\sigma^2}$ and $\gamma_{\sf b} \triangleq \frac{|\alpha|^2 P_{\sf s} \beta_{\sf v} \beta_{\sf g}}{\sigma^2}$, respectively, {where ${\beta_{\sf f}} = \bbE[|f_m|^2]$, ${\beta_{\sf v}} = \bbE[|v|^2]$, and ${\beta_{\sf g}} = \bbE[|g_m|^2]$, $\forall m, k$. We also define the relative SNR between the backscatter link and the direct link as $\Delta \gamma \triangleq \frac{\gamma_{\sf b}}{\gamma_{\sf d}}$.

{{\textcolor{black}{For notational simplicity}}}, we assume that $\alpha=1$, $P_{\sf s} =1$, $\beta_{\sf f}=1$, and vary $\sigma^2$ according to $\gamma_{\sf d}$. {{\textcolor{black}{Notice that this assumption does not affect the analyses and results in the reminder of this paper, since the effect of those constant parameters can be incorporated in the channel coefficients.}}}

Denote $\by_k (n)=[y_{1, k} (n), \ \ldots, \ y_{M, k} (n)]^T$, $\bx_k (n)=[ s_k(n), \ s_k(n) c(n)]^T$, $\bu_k(n)=[u_{1, k} (n), \ldots,$ $u_{M,k} (n)]^T$, and $\bH = [ \bh_1, \bh_2]$, where $\bh_2= \alpha v \bg$. Note $\bh_1$ and $\bh_2$ are the channel responses for the direct link and backscatter link, respectively.

The signal model in~\eqref{eq:rx_sig} can be rewritten as
\begin{align}
  \by_k (n)&= \bh_1 s_k(n) + \bh_2 s_k(n) c(n) + \bu_k (n) \label{eq:yn_1} \\
  &= \bH \bx_k (n) + \bu_k (n).\label{eq:rx_sig_mtx}
\end{align}

From~\eqref{eq:yn_1}, it is seen that the signal backscattered by the A-BD is the multiplication of a low-rate A-BD signal $c(n)$ and the high-rate RF source signal $s_k(n)$. Such operation can be viewed as ``spreading over-the-air'', and the corresponding spreading gain is $K$.

The objective of the C-RX is to recover both the RF-source signal $\bs(n)$ and the A-BD signal $c(n)$ from $\by_k (n)$'s, assuming that the composite channel $\bH$ is known by the C-RX.
\black{Notice that both the direct-link channels $h_{1m}$'s and the composite channels $h_{2m}$'s can be estimated through using pilot signals\footnote{\black{The composite backscatter-link channel was estimated by the tag sending a known preamble in~\cite{BackFiSigcom15}, and estimated by using least-square (LS) algorithm at the receiver in~\cite{YangBackscatter15}.}}~\cite{BackFiSigcom15,YangBackscatter15}.}

Since the backscattered signal is transmitted at the same frequency as the direct-link signal, the CABC system in Fig.~\ref{fig:Fig1} can be considered as a spectrum sharing system~\cite{KangLiangICC17, YCLiangTVT15}. The detection of $\bs(n)$ and $c(n)$ has to consider the mutual effect of the direct-link and the backscatter-link. Specifically, there are two main challenges for signal detection at the C-RX, which are listed as follows: (1) First, since the direct-link channels $\bh_1$ is typically much stronger than the backscatter-link channel $\bh_2$, the signal-to-interference-noise ratio (SINR) for the C-RX to detect the A-BD signal $c(n)$ is very low, if the direct-link signal $\bh_1 s_k(n)$ is treated as interference; (2) While Eq.~\eqref{eq:rx_sig_mtx} looks like a multiple-input-multiple-output (MIMO) model, for each $k$, the two data streams $s_k(n)$ and $c(n)s_k(n)$ are mutually dependent, which makes the receiver design more challenging.


\section{Receiver Design For CABC Under Flat Fading Channels}\label{sec:receiver_flat}
In this section, we design the optimal ML detector, suboptimal linear detectors and SIC-based detectors for the CABC system under flat fading channels.

\vspace{-0.2cm}
\subsection{ML Detector}\label{sec:OptDetection}
The ML estimate $\hatbx_{\sf ml} (n) \triangleq [\hats_0(n), \ \ldots, \ \hats_{K-1}(n), $ $\ \hatc(n)]^T$ is given by~\cite{JGProakisMSalehi05}%
\begin{align} \label{eq:ML_flat}
  \hatbx_{\sf ml} (n)  \!=\!  \underset{ \substack{c(n) \in \calA_{\sf c}, \\ s_k (n) \in \calA_{\sf s}, \forall k}}{\arg \min} \! \sum \limits_{k=0}^{K-1} \! \left\| \by_k (n) \!-\! \bh_1 s_k(n)  \!-\! \bh_2 s_k(n) c(n) \right\|^2.
\end{align}

The number of search in the above ML detector is $|\calA_{\sf c}|  |\calA_{\sf s}|^K$, which grows exponentially as the modulation size $|\calA_{\sf s}|$ increases, resulting into extremely high complexity.

Fortunately, the complexity of ML detection can be reduced significantly by making use of the structure of the received signals. Since the A-BD signal $c(n)$ keeps constant for $s_{k}(n)$, $k=0,\ldots,K-1$, we can obtain the ML estimate of $s_k(n)$ conditioned on each $c(n)$ candidate, denoted as $\hats_k(n) |_{c(n)}$. Based on all the conditional estimates $\hats_k(n) |_{c(n)}$'s, we can obtain the ML estimate of $c(n)$, denoted as $\hatc(n)$. Finally, the conditional estimate $\hats_k(n) = \hats_k(n) |_{\hatc(n)}$ corresponding to $\hatc(n)$ is the ML estimate of $s_k(n)$. The details of the low-complexity ML detector are described as follows.

\emph{1) Estimating $\bs(n)$ for a given $c(n)$}: For a given $c(n)$ candidate, the signal received by the $m$-th antenna in the $k$-th symbol period can be rewritten from~\eqref{eq:yn_1} as
\begin{align}\label{eq:receives yk}
\by_k(n) = \tilbh |_{c(n)} s_k (n) + \bu_k (n),
\end{align}
where the equivalent channel $\tilbh |_{c(n)} \triangleq \bh_1 + \bh_2 c(n)$. By applying maximum-ratio-combining (MRC) to the signal vector $\by_k(n)$, $s_k(n)$ can be estimated as follows 
\begin{align}\label{eq:ML_flat_step1}
  \hats_k(n) |_{c(n)}=   \underset{s_k(n) \in \calA_{\sf s}}{\arg \min} \left| \frac{\tilbh^H |_{c(n)}}{\| \tilbh |_{c(n)} \|^2} \by_k(n) - s_k(n) \right|^2,
\end{align}
for $k=0, \ldots, K-1$.

\emph{2) Estimating the optimal $c(n)$} and $\bs(n)$}: The optimal A-BD signal can be further estimated as follows
\begin{align}\label{eq:ML_flat_step2}
  &\hatc (n) =   \\
  &\underset{c(n) \in \calA_{\sf c}}{\arg \min} \sum \limits_{k=0}^{K-1} \left\| \by_k (n)  \!-\!  \bh_1 \hats_k (n) |_{c(n)} \!-\! \bh_2 c(n) \hats_k (n) |_{c(n)} \right\|^2. \nonumber
\end{align}
Finally, the optimal $s_k(n)$ becomes: $\hats_k(n) = \hats_k(n) |_{\hatc(n)}$, $\forall k=0, \ldots, K-1$.

The number of search in the above two-step ML detector is $K |\calA_{\sf c}| |\calA_{\sf s}|$, which is lower than that of the original ML detector in~\eqref{eq:ML_flat}. \black{Notice that the number of search for the two-step ML detector is still large, for the case of large ratio $K$ and high order modulation at the RF source and/or the A-BD. This motivates us to derive suboptimal detectors with much lower complexity in the next two subsections.}


\vspace{-0.2cm}
\subsection{Linear Detectors}\label{sec:LinearDetection}
For notational convenience, we denote the block-diagonal channel matrix $\tilbH=\diag \{\bH, \ \ldots, \ \bH\} \in \calC^{MK \times 2K}$, the transmit signal vector $\tilbx (n) =[\bx_0 (n) ; \ \ldots; \ \bx_{K-1} (n) ] \in \calC^{2K \times 1}$, the noise vector $\tilbu (n) =[\bu_0 (n) ; \ \ldots; \ \bu_{K-1} (n) ]^T \in \calC^{MK \times 1}$, and the received signal vector $\tilby  (n) =[\by_0 (n) ; \ \ldots; \ \by_{K-1} (n) ]\in \calC^{MK \times 1}$. Thus, the received signals can be rewritten as follows
\begin{align}\label{eq:rx_sig_mtx_composite}
  \tilby(n) = \tilbH \tilbx (n) + \tilbu(n).
\end{align}

For linear detectors, the C-RX applies a block-diagonal decoding matrix $\tilbT = \diag \{\bT_0; \ \ldots, \ $ $\bT_{K-1}\} \in \calC^{2K \times MK}$ with each matrix $\bT_k \in \calC^{2 \times M}$, to extract the signals from both the RF source and the A-BD, i.e.,
\begin{align}
  {\overline{\bx}}(n) = \tilbT \tilby (n). \label{eq:linear_receiver}
\end{align}
For MRC, zero-forcing (ZF) and minimum mean-square-error (MMSE) detectors, each matrix $\bT_k$ in the decoding matrix $\tilbT$, for $k=0, \ldots, K-1$, is given as follows~\cite{GoldsmithWC2005}, respectively,
\begin{align}
\bT_k  \!=\! \left\{ \!
\begin{array}{cl}
\left[ \frac{\bh_1^H }{\left\| \bh_1 \right\|^2}; \frac{\bh_2^H }{\left\| \bh_2 \right\|^2} \right], &\text{for \ \ MRC}  \\
    \left( \bH^H \bH  \right)^{-1} \bH^H , &\text{for \ \ ZF}  \\
    \left( \bH^H  \bH   + \frac{\sigma^2}{P_{\sf s}} \bI_2 \right)^{-1} \bH^H , &\text{for \ \ MMSE}.
  \end{array}
  \right.
\end{align}

After the linear detection, the RF-source and A-BD symbols are recovered as follows
\begin{align}
  \hats_k(n)& \!=\! \underset{s_k(n) \in \calA_{\sf s}}{\arg \min} \left | s_k(n)  \!-\! \barx_{2k+1}(n)\right |, \forall k=0, \ldots, K \!-\! 1 \label{eq:sn_est}\\
  \hatc(n)&=\underset{c(n) \in \calA_{\sf c}}{\arg \min} \sum \limits_{k=0}^{K-1} \left | c(n) - \frac{\barx_{2k+2}(n)}{\hats_k(n)}\right |.
\end{align}

\subsection{SIC-based Detectors}\label{sec:NonlinearDetection}
Since the backscatter-link channel suffers from double fading, the direct-link channel $\bh_1$ is typically stronger than the backscatter-link channel $\bh_2$. As a result, from \eqref{eq:yn_1}, the C-RX first obtains the estimate of the RF-source signals $s_k(n)$ using \eqref{eq:linear_receiver} and \eqref{eq:sn_est}; then subtracts the direct-link interference from the RF source, and detects the A-BD signal $c(n)$; finally, it obtains a refined estimate of $s_k(n)$ by exploiting the recovered A-BD signal. The details of the second and third steps of the SIC-based detector are described as follows.

\subsubsection{Second Step for Estimating $c(n)$}
After obtaining $\hats_k(n)$ from \eqref{eq:sn_est}, we subtract the direct-link interference $\bh_1\hats_k(n)$ from the received signal $\by_k(n)$, yielding the following intermediate signal
\begin{align}
  \bv_k (n) = \by_k(n) - \bh_1 \hats_k(n).
\end{align}
Then, the C-RX applies the MRC detector $\bt_2 = \frac{\bh_2^H}{\| \bh_2 \|^2}$ to the intermediate signal $\bv_k (n)$, and obtains
\begin{align}
  \tily_{2,k}(n) =\bt_2 \bv_k (n).
\end{align}
The A-BD signal $c(n)$ is finally recovered as follows
\begin{align}\label{eq:d_est}
  \hatc(n)=\underset{c(n) \in \calA_{\sf c}}{\arg \min} \sum \limits_{k=0}^{K-1} \left | c(n) - \frac{\tily_{2, k}(n)}{\hats_k(n)} \right |.
\end{align}

\subsubsection{Third Step for Re-Estimating $s(n)$}

From \eqref{eq:yn_1}, the received signal can be rewritten as
\begin{align}
  \by_k (n) = \tilbh |_{c(n)} s_k(n) + \bu_k(n), \label{eq:yn_2}
\end{align}
where $\tilbh |_{c(n)} = \bh_1 + \bh_2 c(n)$. Once we have $\hatc(n)$, we can construct $\widehat{\bw}(n) = \bh_1  + \bh_2 \hatc(n)$, and re-estimate $s_k(n)$ as follows
\begin{align} \label{eq:s_est}
  \hats_k^{\star}(n)=\underset{s_k(n) \in \calA_{\sf s}}{\arg \min} \left | s_k(n) - \frac{\widehat{\bw}^H (n)}{\| \widehat{\bw} (n)\|^2} \by_k(n) \right |.
\end{align}
When MRC, ZF and MMSE estimator are used in the first step for estimating $s_k(n)$, the detector is referred to MRC-SIC detector, ZF-SIC detector and MMSE-SIC detector, respectively.


\section{CABC Under Frequency-Selective Fading Channels}\label{sec:OFDM}

In this section, we study the CABC system under frequency-selective fading channels. OFDM signals are considered as the ambient RF signals as OFDM has been widely adopted in wireless standards, such as WiFi, DVB, and LTE~\cite{GoldsmithWC2005}.


\subsection{Signal Model}\label{subsec:signal_ofdm}
Let $N$ be the number of subcarriers of the OFDM modulation, and ${\mathbf s} (n) = [s_0(n),\ldots, s_{N-1}(n)]^T \in \calC^{N}$ the $n$-th OFDM symbol of the RF source. After inverse discrete Fourier transform (IDFT) operation at the RF source, a cyclic-prefix (CP) of length $N_{\sf c}$ is added at the beginning of each OFDM symbol. We design the symbol period of the A-BD signal $c(n)$ to be the same as the OFDM symbol period. \black{We assume that the A-BD can align the transmission of its own symbol $c(n)$ with its received OFDM symbol\footnote{The effect of imperfect timing synchronization at the A-BD is simulated in Section~\ref{sec:simulations_ofdm_2}. The detection of $c(n)$ is shown to be robust to the imperfect A-BD synchronization, due to the spreading gain and diversity gain with frequency-selective fading channels.}, since the A-BD can estimate the arrival time of OFDM signal by some methods like the scheme that utilizes the repeating structure of CP in OFDM singals~\cite{YangLiangZhangPei17}.} 

The system model is similar to Fig.~\ref{fig:Fig1}. We consider the block fading channel model, where the channel coefficient remains the same within each block but may change among blocks. We assume that the channel block length is much longer than the OFDM symbol period. Let $f_{m, l}$ be the $L_{\sf f}$-path channel response between the RF source and the $m$-th receive antenna, for $m=1,\ \ldots, \ M$, at the C-RX, $v_{l}$ be the $L_{\sf v}$-path channel response between the RF source and the A-BD, $g_{m, l}$ be the $L_{\sf g}$-path channel response between the A-BD and the $m$-th receive antenna at the C-RX. For the direct-link channel, define the frequency response of the $k$-th subcarrier as $\lambda_{m,k} = \sum_{l=0}^{L_{\sf f}-1} f_{m,l}e^{\frac{-j 2\pi k l}{N}}$ for $k = 0,\ldots ,N-1$. Similarly, for the backscatter-link channel, define the frequency response of the $k$-th subcarrier as $\delta_{m,k} = \alpha \sum_{l=0}^{L_{\sf v}-1} v_{l}e^{\frac{- j 2 \pi k l }{N}} \times \sum_{l=0}^{L_{\sf g}-1} g_{m,l}e^{\frac{- j 2 \pi k l }{N}}$, which contains the effect of the double channel fading.
The transmitted time-domain signal in each symbol period is given by
\begin{align}
  x_q (n) \!=\! \sum_{k=0}^{N-1} s_k(n) e^{j 2\pi \frac{qk}{N}}, \text{for} \;q \!=\! 0, \ 1, \ \ldots, \ N \!-\! 1.
\end{align}

For convenience, we assume that the C-RX is timing synchronized to the arrival time of the direct-link signal. The arrival of the backscatter-link signal at the C-RX is typically delayed by a small time $d$ ($d \geq 0$), compared to the arrival of the direct-link signal. The signal received at the $m$-th antenna of the C-RX can be written as
\begin{align}\label{eq:receives signal0}
  y_{m, q} &\!=\!\alpha \sqrt{P_{\sf s}} c(n) \sum_{l_2=0}^{L_{\sf g}-1} \sum_{l_1=0}^{L_{\sf v}-1} x_{q-l_1-l_2-d}(n) v_{l_1} g_{m,l_2}\!+\!... \nonumber \\
  &\qquad \sqrt{P_{\sf s}} \sum_{l=0}^{L_{\sf f}-1} x_{q-l}(n) f_{m,l}+u_{m,q}(n),
\end{align}
where the noise $u_{m,q}(n) \sim \calC \calN(0, \sigma^2)$.


We assume that the delay $d$ is sufficiently small\footnote{For extreme cases with delay $d>d_{\max}$, there exists interblock-interference (IBI) and inter-channel-interference (ICI) for detecting $\bs(n)$, and ISI for detecting $c(n)$. The effect of IBI and ICI will be numerically shown in Section~\ref{sec:simulations_ofdm}.} such that $d \leq d_{\max} \triangleq N_{\sf c}-L_{\sf v}-L_{\sf g}+2$. After removing the CP, the C-RX takes the time window $[N_{\sf c}, \ N_{\sf c}+N-1]$ for discrete-Fourier-transform (DFT) operation. From the (circular) time-shift property of the Fourier transform, the output signal at the $k$-th subcarrier by the $m$-th antenna can be written from~\eqref{eq:receives signal0} as~\cite{MorelliKuoProceed07}
\begin{align}
{z_{m, k}}& (n) = \sum_{q=0}^{N-1} y_{m,q} (n)  e^{-j 2\pi \frac{qk}{N}} \label{eq:receives signal3} \\
&\eqa \lambda_{m,k} s_k (n) \!+\! c(n) \delta_{m,k} s_k (n) e^{-j 2\pi \frac{dk}{N}} \!+\! \tilu_{m, k} (n), \nonumber
\end{align}
where the frequency-domain noise $\tilu_{m, k} (n) \sim \calC \calN(0, \sigma^2)$.

From~\eqref{eq:receives signal3}, it is observed that the signal backscattered by the A-BD is the multiplication of a low-rate A-BD signal $c(n)$ and the high-rate spreading code signal $s_k(n)$ in an over-the-air manner. The corresponding spreading gain for transmitting the A-BD signal $c(n)$ is $N$.

Define the signals received by all antennas at the $k$-th subcarrier as the vector $\tilbz_k(n)=[z_{1, k}(n), \  \ldots, \ $ $ z_{M,k}(n)]^T$, which can be rewritten from~\eqref{eq:receives signal3} as follows
\begin{align}\label{eq:zk_1}
  \tilbz_k(n) = \bh_{{\sf d}, k} s_k(n) + \bh_{{\sf b}, k} s_k(n) c(n) +\tilbu_k(n),
\end{align}
where the direct-link channel vector $\bh_{{\sf d}, k} = [\lambda_{1,k},  \ldots,  \lambda_{M,k}]^T$, the backscatter-link channel vector $\bh_{{\sf b}, k} = e^{-j 2\pi \frac{dk}{N}} \cdot [\delta_{1,k},  \ldots,  \delta_{M,k}]^T$, and the noise vector $\tilbu_k(n)=[\tilu_{1, k}(n), \ldots, \tilu_{M, k}(n)]^T$.


\subsection{Optimal ML Detector}
Notice that the signal model~\eqref{eq:zk_1} has the same structure as the signal model~\eqref{eq:yn_1} under flat-fading channels in Section~\ref{sec:model_flat}. Hence, we directly present the low-complexity ML detector.

\emph{1) Estimating $\bs(n)$ for given $c(n)$}: For a given $c(n)$ candidate, the signal received by the $m$-th antenna at the $k$-th subcarrier can be rewritten as
\begin{eqnarray}\label{eq:receives signal4}
{z_{m, k}(n)} = \tilH_{m, k} |_{c(n)} s_k (n) + \tilu_{m, k} (n),
\end{eqnarray}
where the equivalent channel $\tilH_{m, k} |_{c(n)} \triangleq \lambda_{m,k} + c(n) \delta_{m,k} e^{-j 2\pi \frac{dk}{N}}$. Define the equivalent channel vector $\tilbh_k |_{c(n)} =[\tilH_{1, k} |_{c(n)},  \  \ldots, \ \tilH_{M, k} |_{c(n)}]^T$. Given $c(n)$, by applying MRC to the signal vector $\tilbz_k(n)$, the $s_k(n)$ can be estimated and quantized as follows 
\begin{align}\label{eq:ML_ofdm_step1}
  \hats_k(n) |_{c(n)} \!=\!   \underset{s_k(n) \in \calA_{\sf s}}{\arg \min} \left| \frac{\tilbh_k^H |_{c(n)}}{\| \tilbh_k |_{c(n)} \|^2} \tilbz_k(n) \!-\! s_k(n) \right|^2,
\end{align}
for $k=0, \ldots, N-1$.


\emph{2) Estimating the optimal $c(n)$ and $\bs(n)$}: The optimal A-BD signal can be estimated as follows
\begin{align}\label{eq:ML_ofdm_step2}
  \hatc (n)  \!=\!   \underset{c(n) \in \calA_{\sf c}}{\arg \min} \! \sum \limits_{m=1}^M \! \sum \limits_{k=0}^{N-1} \left| z_{m, k} (n) \!-\! \tilH_{m, k} |_{c(n)} \hats_k (n) |_{c(n)} \right|^2.
\end{align}
Finally, the optimal $s_k(n)$ is $\hats_k(n) = \hats_k(n) |_{\hatc(n)}, \; \forall k=0, \ldots, N-1$.

Different from the flat fading channel case in Section~\ref{sec:OptDetection}, the estimation of $c(n)$ in~\eqref{eq:ML_ofdm_step2} benefits not only from the spreading gain, but also from the frequency diversity.
\section{Performance Analysis}\label{sec:ber_analysis}
In this section, we analyze the data rate and error rate performance for the proposed CABC system. For analytical convenience, we assume that the RF source and the A-BD adopt the quadrature phase shift keying (QPSK) and the binary phase shift keying (BPSK) modulation, respectively. That is, $\calA_{\sf s} = (\pm 1 \pm j)/\sqrt{2}$, and $\calA_{\sf c} = \pm 1$. However, the analytical method can be generalized to other modulation schemes~\cite{JGProakisMSalehi05}.

\subsection{Flat Fading Channels}
\subsubsection{A-BD Data Rate Performance}
{{\textcolor{black}{The data (symbol) rate of the A-BD is given by}}}
\begin{align}\label{eq:rate}
  R_{\sf c} = \frac{R_{\sf s}}{K}.
\end{align}

\subsubsection{BER Performance for ML detector}\label{sec:ber_analysis_ML}
In this subsection, we analyze the BER performance for the case of $K=1$, for ease of exposition. We ignore the subscript $k=0$ and use the notation $s(n)$, for simplicity. However, the analytical method can be generalized to other cases of $K>1$. In fact, for cases of $K>1$, the BER analysis is analogous to that for ML detector for the CABC system over ambient OFDM carriers in Section~\ref{sec:analyze_ofdm_ML}, thus omitted herein.

Given $\bH$, we denote the BERs of $s(n)$ and $c(n)$ by $P_{\sf e, s} (\bH)$ and $P_{\sf e, c} (\bH)$, respectively. For ML detector, we have the following theorem on the BER.
\begin{mythe}\label{theorem:BERML}
Given $\bH$, the BERs of using ML detector to detect $s(n)$ and $c(n)$ are given as follows, respectively,
\begin{align}
  P_{\sf e, s} (\bH) &= \frac{-C_1(\bH)-\sqrt{C_1^2(\bH) \!- \!4C_2(\bH)C_0(\bH)}}{2C_2(\bH)}, \label{eq:BER_s_analysis} \\
  P_{\sf e, c} (\bH) &= \frac{P_{\sf e, s} (\bH)-a_1(\bH)}{a_2(\bH) - a_1(\bH)},\label{eq:BER_t_analysis}
\end{align}
where the coefficients
\begin{align}
C_0(\bH)&= b_1(\bH) a_2(\bH) +a_1(\bH) \left[1-b_1(\bH)\right], \\
C_1(\bH)&= \left[1-2b_1(\bH)\right] \left[a_2(\bH)-a_1(\bH)\right]-1,\\
C_2(\bH)&= \left[b_1(\bH)+b_2(\bH)-1\right]\left[a_2(\bH)-a_1(\bH)\right],
\end{align}
where the coefficients each of which represents the BER of $s(n)$ or $c(n)$ for certain condition, are given by
\begin{align}
  a_1 (\bH) \!&= \! \frac{1}{2} Q \left( \! \! \frac{ \| \bh_1 \!+\! \bh_2 \|}{\sigma^2} \! \! \right) \!+\!  \frac{1}{2} Q \left( \! \! \frac{ \| \bh_1 \!-\! \bh_2 \|}{\sigma} \! \! \right),  \label{eq:a1}\\
    a_2 (\bH) \!&= \frac{1}{4}\calQ \left( \frac{\| \bh_1 -\bh_2\|(\theta_{R,1}(\bH) + \theta_{I,1}(\bH))}{\sigma}\right) +...\nonumber \\
    &\quad \frac{1}{4}\calQ \left( \frac{\| \bh_1 -\bh_2\|(\theta_{R,1}(\bH) - \theta_{I,1}(\bH))}{\sigma}\right) +... \nonumber \\
   &\quad \frac{1}{4}\calQ \left( \frac{\| \bh_1 +\bh_2\|(\theta_{R,2}(\bH) + \theta_{I,2}(\bH))}{\sigma}\right) +...\nonumber \\
   &\quad \frac{1}{4}\calQ \left( \frac{\| \bh_1 +\bh_2\|(\theta_{R,2}(\bH) - \theta_{I,2}(\bH))}{\sigma}\right),\label{eq:a2}\\
  b_1 (\bH) &= Q \left( \frac{\sqrt{2} \| \bh_2 \|}{\sigma} \right),  \label{eq:b1} \\
  b_2(\bH) &=\frac{1}{2} \Big[\! \calQ \! \left( \! \frac{\sqrt{2}\| \bh_2\|}{\sigma} \! \left(\!  -1 \!+\! 2 \textrm{\em Re} \left\{ \frac{\bh_2^H \bh_1}{\|\bh_2\|^2} \right\} \! \right) \! \right) +...\nonumber \\
  &\quad \calQ \left( \frac{\sqrt{2}\| \bh_2\|}{\sigma} \left( -1-2 \textrm{\em Re} \left\{ \frac{\bh_2^H \bh_1}{\|\bh_2\|^2}\right\} \right)\right)\Big],\label{eq:b2}
\end{align}
{{\textcolor{black}{where the $\calQ$-function $\calQ(z) \!=\! \frac{1}{\sqrt{2 \pi}} \int_z^{\infty} e^{-u^2/2} du$, the expressions $\theta_{R,1} (\bH) \!=\! \textrm{\em Re} \left\{ \frac{(\bh_1-\bh_2)^H (\bh_1+\bh_2)}{\| \bh_1-\bh_2\|^2} \right\}, \theta_{I,1} (\bH) =$ $\textrm{\em Im} \left\{ \! \frac{(\bh_1 \!-\!\bh_2)^H (\bh_1 \!+\! \bh_2)}{\| \bh_1 \!-\! \bh_2\|^2} \! \right\}$, $\theta_{R,2}(\bH) \!=\! \textrm{\em Re} \left\{ \frac{(\bh_1+\bh_2)^H (\bh_1-\bh_2)}{\| \bh_1+\bh_2\|^2} \right\}$ and $\theta_{I,2}(\bH) \!=\! \textrm{\em Im} \left( \frac{(\bh_1+\bh_2)^H (\bh_1-\bh_2)}{\| \bh_1+\bh_2\|^2} \right)$.}}}
\end{mythe}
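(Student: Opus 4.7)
The plan is a two-step decomposition via law of total probability that collapses the jointly coupled ML detection into a pair of algebraic identities in $P_{\sf e, s}(\bH)$ and $P_{\sf e, c}(\bH)$. I would first interpret the four auxiliary quantities as conditional bit error rates: $a_1(\bH)$ and $a_2(\bH)$ are the per-bit BERs of $s(n)$ conditional, respectively, on $\hat{c}(n)=c(n)$ and $\hat{c}(n)\neq c(n)$, averaged over the equiprobable true $c\in\{\pm 1\}$; while $b_1(\bH),b_2(\bH)$ are the BERs of $c(n)$ conditional on $\hat{s}(n)=s(n)$ and $\hat{s}(n)=-s(n)$, respectively. Their closed-form $\calQ$-function expressions then follow from standard CSCG-tail evaluations: for $a_1$, the QPSK-MRC uses the correct effective channel $\bh_1+\bh_2 c$, giving per-bit BER $\calQ(\|\bh_1+\bh_2 c\|/\sigma)$ whose average over $c$ reproduces \eqref{eq:a1}; for $a_2$, the projection onto the mismatched channel $\bh_1-\bh_2 c$ produces a rotated QPSK signal $(\theta_R+j\theta_I)s+\tilde{u}$, and the sign decisions for the real- and imaginary-bits over the four $(s_R,s_I)$ combinations collapse (via $s_R^2=s_I^2=1$) to the four $\calQ$-terms of \eqref{eq:a2}; for $b_1,b_2$, the $c$-LLR, which is proportional to $\textrm{Re}\{\hat{s}^{\ast}\bh_2^H\by\}-\textrm{Re}\{\bh_1^H\bh_2\}$, evaluated at $\hat{s}=\pm s$ directly yields \eqref{eq:b1}--\eqref{eq:b2}.

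Second, I would assemble two coupled equations. Conditioning $P_{\sf e, s}$ on correctness of $\hat{c}$ immediately gives $P_{\sf e, s}=a_1(1-P_{\sf e, c})+a_2\,P_{\sf e, c}$, which rearranges to \eqref{eq:BER_t_analysis}. For the companion equation I would decompose $P_{\sf e, c}$ over the four possible QPSK outcomes $\hat{s}\in\{s,\,js,\,-js,\,-s\}$. The critical intermediate lemma is that the BER of $c$ conditional on $\hat{s}=\pm js$ (a $90^\circ$ rotation of $s$), once averaged over the equiprobable true $c$, equals exactly $1/2$: substituting $\hat{s}=\pm js$ into the $c$-LLR isolates the term $\textrm{Im}\{\bh_2^H\bh_1\pm c\|\bh_2\|^2\}=\textrm{Im}\{\bh_2^H\bh_1\}$, which is independent of $c$ because $\|\bh_2\|^2$ is real-valued, so the $\hat{c}$-distribution coincides under both $c$-hypotheses and the averaged error rate must be $1/2$. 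Combined with the independence of the QPSK real- and imaginary-bit decisions (guaranteed by the independent real and imaginary components of the projected CSCG noise), the four $\hat{s}$ outcomes occur with probabilities $(1-P_{\sf e, s})^2,\,P_{\sf e, s}(1-P_{\sf e, s}),\,P_{\sf e, s}(1-P_{\sf e, s}),\,P_{\sf e, s}^2$, yielding
\begin{align*}
P_{\sf e, c} = b_1(1-P_{\sf e, s})^2 + P_{\sf e, s}(1-P_{\sf e, s}) + b_2\,P_{\sf e, s}^2.
\end{align*}

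Finally, substituting the linear relation $P_{\sf e, c}=(P_{\sf e, s}-a_1)/(a_2-a_1)$ into the quadratic above and collecting powers of $P_{\sf e, s}$ produces $C_2 P_{\sf e, s}^2 + C_1 P_{\sf e, s} + C_0 = 0$ with exactly the coefficients stated in the theorem; selecting the root in $[0,1]$ (the smaller root when $C_2>0$) then gives \eqref{eq:BER_s_analysis}. The main obstacle is the intermediate $1/2$-lemma on the $90^\circ$-rotation events: establishing it as an exact algebraic symmetry (rather than a high-SNR approximation) hinges on $\|\bh_2\|^2$ being real, so that the $c$-dependence disappears on taking the imaginary part of the $c$-LLR; a secondary subtlety is justifying the $(1-P_{\sf e, s})^2 / P_{\sf e, s}^2$ factorization under the joint ML detector, which rests on the conditional independence of QPSK bit decisions given the CSCG structure of the projected noise.
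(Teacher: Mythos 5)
Your proposal is correct and follows essentially the same route as the paper's proof in Appendix~\ref{proof_Thm_ML}: the same conditional-BER interpretation of $a_1,a_2,b_1,b_2$, the same two coupled relations $P_{\sf e,s}=(1-P_{\sf e,c})a_1+P_{\sf e,c}a_2$ and $P_{\sf e,c}=(1-P_{\sf e,s})^2b_1+P_{\sf e,s}(1-P_{\sf e,s})+P_{\sf e,s}^2b_2$, the same observation that the $\pm js$ rotation events make the decision statistic independent of $c(n)$ and hence give error rate $1/2$, and the same elimination to a quadratic in $P_{\sf e,s}$. The only cosmetic difference is that you state the $1/2$-lemma and the bit-independence factorization as explicit intermediate claims, which the paper handles inline.
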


\begin{proof}
See proofs in Appendix~\ref{proof_Thm_ML}.
\end{proof}

By taking the expectation over the channel $\bH$, the average BERs are obtained as $\bar{P}_{\sf e, s} = \bbE_{\bH} \left[P_{\sf e, s} (\bH)\right]$ and $\bar{P}_{\sf e, c} = \bbE_{\bH} \left[P_{\sf e, c} (\bH)\right]$, respectively.

\subsubsection{BER Performance for Linear Detectors}\label{eq:ber_linear}
For MRC detector, we have the following proposition on the BER.
\begin{mypro}\label{theorem:BERMRC}
Given $\bH$, the BERs of using MRC detector to detect $s(n)$ and $c(n)$ are {{\textcolor{black}{given in~\eqref{eq:BER_s_analysis_mrc} and~\eqref{eq:BER_t_analysis_mrc} at the top of the next page, respectively.}}}
\begin{figure*}[!t]
\begin{align}
P_{\sf e, s} (\bH) &= \frac{1}{4}\calQ \left( \frac{\| \bh_1 \|}{\sigma} \left( 1+ \frac{\textrm{Re} \left\{\bh_1^H\bh_2\right\}-\textrm{Im} \left\{\bh_1^H\bh_2\right\}}{\| \bh_1\|^2}\right)\right) \!+\! \frac{1}{4}\calQ \left( \frac{\| \bh_1 \|}{\sigma} \left( 1+ \frac{\textrm{Re} \left\{\bh_1^H\bh_2\right\}+\textrm{Im} \left\{\bh_1^H\bh_2\right\}}{\| \bh_1\|^2}\right)\right)+... \nonumber \\
  &\quad \;\; \frac{1}{4}\calQ \left( \! \frac{\| \bh_1 \|}{\sigma} \left( \! 1 \!-\! \frac{\textrm{Re} \left\{\bh_1^H\bh_2\right\}-\textrm{Im} \left\{\bh_1^H\bh_2\right\}}{\| \bh_1\|^2} \! \right) \! \right) \!+\! \frac{1}{4}\calQ \left( \! \frac{\| \bh_1 \|}{\sigma} \left( \! 1 \!-\! \frac{\textrm{Re} \left\{\bh_1^H\bh_2\right\}+\textrm{Im} \left\{\bh_1^H\bh_2\right\}}{\| \bh_1\|^2}\! \right) \!\right), \label{eq:BER_s_analysis_mrc} \\
P_{\sf e, c} (\bH) &=\frac{\left(1\!-\!P_{\sf e, s} (\bH)\right)^2}{2}\calQ \left(\! \frac{\sqrt{2} \| \bh_2 \|}{\sigma} \left(\! 1 \!+\! \frac{\textrm{Re} \left\{\bh_2^H\bh_1\right\}}{\| \bh_2\|^2}\! \right) \!\right)+\frac{\left(1 \!-\! P_{\sf e, s} (\bH)\right)^2}{2}\calQ \left(\! \frac{\sqrt{2} \| \bh_2 \|}{\sigma} \left( \! 1\!- \!\frac{\textrm{Re} \left\{\bh_2^H\bh_1\right\}}{\| \bh_2\|^2}\! \right) \! \right) +... \nonumber \\
  &\quad \;\; \frac{P_{\sf e, s}^2 (\bH)}{2}\calQ \left( \frac{\sqrt{2} \| \bh_2 \|}{\sigma} \left( -1- \frac{\textrm{Re} \left\{\bh_2^H\bh_1\right\}}{\| \bh_2\|^2}\right)\right)+\frac{P_{\sf e, s}^2 (\bH)}{2} \calQ \left( \frac{\sqrt{2} \| \bh_2 \|}{\sigma} \left( -1+ \frac{\textrm{Re} \left\{\bh_2^H\bh_1\right\}}{\| \bh_2\|^2}\right)\right)+... \nonumber \\
  &\quad \;\; P_{\sf e, s} (\bH)\left(1-P_{\sf e, s} (\bH)\right). \label{eq:BER_t_analysis_mrc}
\end{align}
\end{figure*}
\end{mypro}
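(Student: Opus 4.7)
The plan is to run two parallel calculations: first derive the BER of $s(n)$ by analyzing the MRC statistic that targets the direct-link channel, then derive the BER of $c(n)$ by analyzing the MRC statistic that targets the backscatter-link channel together with the subsequent decision rule (\ref{eq:d_est}) that divides by $\hats(n)$.

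For the $s(n)$ BER, I would apply the MRC weight $\bt_1=\bh_1^H/\|\bh_1\|^2$ to (\ref{eq:yn_1}), reducing the vector observation to the scalar model $\barx_1(n)=s(n)(1+\eta\,c(n))+w$, with $\eta\triangleq\bh_1^H\bh_2/\|\bh_1\|^2$ and $w\sim\calC\calN(0,\sigma^2/\|\bh_1\|^2)$. Because Gray-coded QPSK places the two information bits on the signs of $\sqrt{2}\,\textrm{Re}\{s\}$ and $\sqrt{2}\,\textrm{Im}\{s\}$, and the real/imaginary components of $w$ are independent Gaussians of variance $\sigma^2/(2\|\bh_1\|^2)$, the BER equals the average of the real-axis and imaginary-axis bit-error probabilities. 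Expanding $s(1+\eta c)$ shows that the two axes have mean-to-noise-std ratios of the form $\|\bh_1\|\bigl(\pm1+(\textrm{Re}\{\eta\}\mp\textrm{Im}\{\eta\})c\bigr)/\sigma$ and $\|\bh_1\|\bigl(\pm1+(\textrm{Re}\{\eta\}\pm\textrm{Im}\{\eta\})c\bigr)/\sigma$; conditioning on the four equally-likely pairs (transmitted bit, $c$) per axis and using $\calQ(-x)=1-\calQ(x)$ together with the symmetry between the two bit values collapses the eight raw conditional probabilities into the four $\calQ$-terms in (\ref{eq:BER_s_analysis_mrc}).

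For the $c(n)$ BER, I would apply $\bt_2=\bh_2^H/\|\bh_2\|^2$ to (\ref{eq:yn_1}) to obtain $\barx_2(n)=s(n)(c(n)+\eta')+w'$ with $\eta'\triangleq\bh_2^H\bh_1/\|\bh_2\|^2$ and $w'\sim\calC\calN(0,\sigma^2/\|\bh_2\|^2)$. Since $|\hats(n)|=1$ for QPSK, the quantity $\barx_2/\hats$ equals $(c+\eta')r+w''$ where $r\triangleq s/\hats\in\{1,-1,+j,-j\}$ and $w''$ retains the distribution of $w'$; the rule (\ref{eq:d_est}) becomes $\hatc=\mathrm{sign}(\textrm{Re}\{\barx_2/\hats\})$. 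I would condition on the four values of $r$. Using the independence of the two QPSK bit errors established in Step~1, the joint probabilities are $\Pr\{r=1\}=(1-P_{\sf e,s})^2$, $\Pr\{r=-1\}=P_{\sf e,s}^2$, and $\Pr\{r=+j\}=\Pr\{r=-j\}=P_{\sf e,s}(1-P_{\sf e,s})$. For $r=\pm1$, $\textrm{Re}\{(c+\eta')r\}=\pm(c+\textrm{Re}\{\eta'\})$, and averaging over $c\in\{\pm1\}$ gives the two bracketed pairs of $\calQ$-functions in (\ref{eq:BER_t_analysis_mrc}), with the $r=-1$ pair rewritten via $\calQ(-x)=1-\calQ(x)$ so the stated negative arguments appear. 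For $r=\pm j$, $\textrm{Re}\{(c+\eta')r\}=\mp\textrm{Im}\{\eta'\}$ does not depend on $c$, so averaging over $c$ yields conditional error $\tfrac{1}{2}$ via $\calQ(x)+\calQ(-x)=1$, contributing $2\cdot P_{\sf e,s}(1-P_{\sf e,s})\cdot\tfrac{1}{2}=P_{\sf e,s}(1-P_{\sf e,s})$, the final term of (\ref{eq:BER_t_analysis_mrc}).

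The main obstacle I foresee is justifying the independence of the two bit-errors of $\hats(n)$, which is what lets the four weights $(1-P_{\sf e,s})^2,\,P_{\sf e,s}^2,\,P_{\sf e,s}(1-P_{\sf e,s}),\,P_{\sf e,s}(1-P_{\sf e,s})$ be written as simple products of $P_{\sf e,s}$ and $1-P_{\sf e,s}$. This rests on the circular symmetry of the MRC noise $w$ from Step~1, which decouples the real and imaginary decision statistics into separate sign tests on independent Gaussians; once that decoupling is granted, the rest of the argument is bookkeeping over the $4\times 2$ (bit pattern of $\hats$, value of $c$) combinations together with routine applications of $\calQ(-x)=1-\calQ(x)$ and $\calQ(x)+\calQ(-x)=1$.
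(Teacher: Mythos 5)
Your proposal is correct and follows essentially the same route as the paper's Appendix B: apply the two MRC combiners $\bh_1^H/\|\bh_1\|^2$ and $\bh_2^H/\|\bh_2\|^2$, treat the QPSK detection of $s(n)$ as two independent sign tests on the real and imaginary axes conditioned on the four $(s_I,c)$ (resp.\ $(s_R,c)$) combinations, and then condition the $c(n)$ error on the value of $s(n)/\hats(n)\in\{1,-1,\pm j\}$ with weights $(1-P_{\sf e,s})^2$, $P_{\sf e,s}^2$, $P_{\sf e,s}(1-P_{\sf e,s})$, the $\pm j$ cases contributing error $1/2$ because the decision statistic's mean is then independent of $c(n)$. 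The bit-error independence assumption you flag is made implicitly in the paper as well, so it is not a gap relative to the reference proof.
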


\begin{proof}
{{\textcolor{black}{See proofs in Appendix~\ref{proof_Thm_MRC}.}}}
\end{proof}

Denote the singular vector decomposition (SVD) of $\bH$ as $\bH=\bU \bLambda \bV^H$. Denote the matrix $\bA=(\bH^H \bH)^{-1}=\bV \bLambda^{-2} \bV^H$, with element $A_{ij}$ in its $i$-th row and $j$-th column. For ZF detector, we have the following proposition on the BER.
\begin{mypro}\label{theorem:BERZF}
Given $\bH$, the BERs of using ZF detector to detect $s(n)$ and $c(n)$ are given as follows
\begin{align}
  P_{\sf e, s} (\bH) &= \calQ \left( \frac{1}{\sigma \sqrt{A_{11}(\bH)}}\right), \label{eq:BER_s_analysis_zf} \\
  P_{\sf e, c} (\bH) &= \left(1-P_{\sf e, s} (\bH)\right)^2 \calQ \left( \frac{\sqrt{2}}{\sigma \sqrt{A_{22}(\bH)}}\right)+... \label{eq:BER_t_analysis_zf} \\
  P_{\sf e, s} (&\bH)\left(1 \!-\! P_{\sf e, s} (\bH)\right) \!+\! P_{\sf e, s}^2 (\bH) \calQ \left( \!-\! \frac{\sqrt{2}}{\sigma \sqrt{A_{22}(\bH)}}\right). \nonumber
\end{align}
\end{mypro}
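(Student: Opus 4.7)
The plan is to exploit the fact that the ZF filter $(\bH^H\bH)^{-1}\bH^H$ perfectly decouples the two-stream model~\eqref{eq:rx_sig_mtx}. First I would substitute $\bx(n)=[s(n),\,s(n)c(n)]^T$ into $\bar{\bx}(n)=(\bH^H\bH)^{-1}\bH^H\by(n)$, obtaining $\bar{x}_1(n)=s(n)+z_1(n)$ and $\bar{x}_2(n)=s(n)c(n)+z_2(n)$, where $[z_1(n),z_2(n)]^T$ is zero-mean complex Gaussian with covariance $\sigma^2\bA$. In particular the marginals are $z_1(n)\sim\calC\calN(0,\sigma^2 A_{11})$ and $z_2(n)\sim\calC\calN(0,\sigma^2 A_{22})$. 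The expression~\eqref{eq:BER_s_analysis_zf} for $P_{\sf e,s}(\bH)$ then follows immediately from the standard Gray-coded QPSK BER formula applied to the scalar channel $\bar{x}_1(n)$: the real-part noise variance is $\sigma^2 A_{11}/2$ and the real-part signal amplitude is $1/\sqrt{2}$, so the per-bit error probability equals $\calQ(1/(\sigma\sqrt{A_{11}}))$.

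For $P_{\sf e,c}(\bH)$ the quantizer in~\eqref{eq:d_est} collapses to $\hatc(n)=\sgn(\textrm{Re}\{\bar{x}_2(n)/\hats(n)\})$ since $c(n)\in\{\pm 1\}$. I would then partition the sample space according to the ratio $s(n)/\hats(n)\in\{1,+j,-j,-1\}$, which for Gray-coded QPSK correspond to zero, one, one, and two bit errors in $\hats(n)$. Under the bit-independence that Gray labeling induces for the scalar QPSK channel $\bar{x}_1$, these four events carry probabilities $(1-P_{\sf e,s})^2$, $P_{\sf e,s}(1-P_{\sf e,s})$, $P_{\sf e,s}(1-P_{\sf e,s})$, and $P_{\sf e,s}^2$. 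The conditional BERs of $c(n)$ are then obtained branch by branch: when $s/\hats=\pm j$ the signal $\pm j c(n)$ is purely imaginary, so $\textrm{Re}\{\bar{x}_2/\hats\}$ contains no signal and the BER is $1/2$, which sums into the middle term $P_{\sf e,s}(1-P_{\sf e,s})$ of~\eqref{eq:BER_t_analysis_zf}; when $\hats$ is correct, using $|s|=1$ the noise $z_2/s$ is still $\calC\calN(0,\sigma^2 A_{22})$ and the BER equals $\calQ(\sqrt{2}/(\sigma\sqrt{A_{22}}))$; and when $s/\hats=-1$ the signal flips sign, yielding the biased conditional BER $\calQ(-\sqrt{2}/(\sigma\sqrt{A_{22}}))$. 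Assembling the three contributions by total probability gives~\eqref{eq:BER_t_analysis_zf}.

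The main obstacle is the careful bookkeeping in the QPSK error-type enumeration together with the treatment of the statistical dependence between the event $\{\hats(n)\ne s(n)\}$ and the noise $z_2(n)$, which in general are correlated through the off-diagonal entry $A_{12}$ of $\bA$. The cleanest resolution is to follow the convention already established in the proof of Proposition~\ref{theorem:BERMRC}, namely to decompose according to the marginal variances $A_{11}$ and $A_{22}$ so that the branch probabilities factorize as above; this is exact when the columns of $\bH$ are orthogonal and matches the closed form quoted in the proposition.
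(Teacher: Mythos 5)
Your proposal is correct and follows essentially the same route as the paper's own proof in Appendix~\ref{proof_Thm_ZF}: apply the ZF filter to obtain decoupled streams with noise covariance $\sigma^2\bA$, read off the Gray-coded QPSK BER of $s(n)$ from the marginal variance $A_{11}(\bH)$, and then condition on the four values of $s(n)/\hats(n)$ with branch probabilities $(1-P_{\sf e, s})^2$, $2P_{\sf e, s}(1-P_{\sf e, s})$, $P_{\sf e, s}^2$ to assemble $P_{\sf e, c}(\bH)$ exactly as in \eqref{eq:BER_t_analysis_zf}. The statistical dependence between the event $\{\hats(n)\neq s(n)\}$ and the noise $z_2(n)$ through $A_{12}$ that you flag as the main obstacle is likewise passed over silently in the paper's derivation, so your treatment matches the proposition as stated.
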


\begin{proof}
{{\textcolor{black}{See proofs in Appendix~\ref{proof_Thm_ZF}.}}}
\end{proof}

For MMSE detector, we have the following proposition on the BER.
\begin{mypro}\label{theorem:BERMMSE}
Given $\bH$, the BERs of using MMSE detector to detect $s(n)$ and $c(n)$ are given as follows
\begin{align}
  &P_{\sf e, s} (\bH) \!=\! \calQ \left( \sqrt{\bh_1^H \left( \bh_2 \bh_2^H + \sigma^2 \bI \right)^{-1} \bh_1}\right), \label{eq:BER_s_analysis_mmse} \\
  &P_{\sf e, c} (\bH) \!=\! P_{\sf e, s}^2 (\bH) \calQ \left( \!-\! \sqrt{\bh_2^H \left( \bh_1 \bh_1^H \!+\! \sigma^2 \bI \right)^{-1} \bh_2}\right)\!+\!...\nonumber \\
  &\qquad \left(\! 1 \!-\! P_{\sf e, s} (\bH) \! \right)^2 \! \calQ \! \left( \! \sqrt{\bh_2^H \left( \bh_1 \bh_1^H \!+\! \sigma^2 \bI \right)^{-1} \bh_2}\right) \!+\!...  \nonumber \\
  &\qquad  P_{\sf e, s} (\bH) \left(1 \!-\! P_{\sf e, s} (\bH)\right). \label{eq:BER_t_analysis_mmse}
\end{align}
\end{mypro}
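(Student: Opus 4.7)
The plan is to mirror the proof of Proposition~\ref{theorem:BERZF} (ZF detector), replacing the ZF filter and its post-processing noise variance by the MMSE filter and its standard output SINR expressions. I would proceed in three steps.

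First, starting from $\bT_k = (\bH^H\bH + (\sigma^2/P_{\sf s})\bI_2)^{-1}\bH^H$ and applying the matrix inversion lemma, I would identify the post-filter scalar models for each component of $\barbx(n) = \bT_k \by_k(n)$. Treating the two columns of $\bH$ as unit-variance independent streams (which is the working assumption already built into the MMSE filter defined in Section~\ref{sec:LinearDetection}), the classical MMSE SINR identity yields
\[
  \rho_1 = \bh_1^H(\bh_2\bh_2^H + \sigma^2\bI)^{-1}\bh_1, \qquad \rho_2 = \bh_2^H(\bh_1\bh_1^H + \sigma^2\bI)^{-1}\bh_2,
\]
so that, after the Gaussian approximation for the residual interference-plus-noise and the usual unbiasing step, the first output behaves like $s_k(n)$ plus $\calC\calN(0,1/\rho_1)$ noise and the second like $s_k(n)c(n)$ plus $\calC\calN(0,1/\rho_2)$ noise.

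Second, the BER of $s_k(n)$ follows by applying the QPSK per-bit error formula to the scalar complex Gaussian channel of SINR $\rho_1$, which directly yields~\eqref{eq:BER_s_analysis_mmse}.

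Third, for the BER of $c(n)\in\{\pm 1\}$ detected via~\eqref{eq:d_est}, I would condition on the QPSK decision $\hats_k(n)$ relative to the true $s_k(n)$. Because the I- and Q-bit errors after the MMSE filter are (by circular symmetry) independent with marginal probability $P_{\sf e,s}(\bH)$, there are three substantive cases: (i) $\hats_k = s_k$ with probability $(1-P_{\sf e,s})^2$, so $\barx_{2k+2}(n)/\hats_k(n)$ equals $c(n)$ plus $\calC\calN(0,1/\rho_2)$ noise and the BPSK decision on the real part errs with probability $\calQ(\sqrt{\rho_2})$; (ii) $\hats_k = -s_k$ with probability $P_{\sf e,s}^2$, so the ratio equals $-c(n)$ plus the same noise and the error probability is $1-\calQ(\sqrt{\rho_2}) = \calQ(-\sqrt{\rho_2})$; (iii) $\hats_k = \pm j s_k$ with probability $2P_{\sf e,s}(1-P_{\sf e,s})$, so the ratio is rotated by $\pm j$, its real part is pure noise, and the conditional error probability is $\tfrac{1}{2}$, contributing $P_{\sf e,s}(1-P_{\sf e,s})$. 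Adding the three weighted contributions reproduces~\eqref{eq:BER_t_analysis_mmse}.

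The main obstacle is the first step: strictly speaking $s_k(n)$ and $s_k(n)c(n)$ are statistically dependent through the common factor $s_k(n)$, whereas the classical MMSE SINR derivation assumes independent unit-variance inputs. I would justify using the standard SINR expressions by noting that the MMSE filter defined in Section~\ref{sec:LinearDetection} itself already adopts this independence assumption, so using it for performance analysis is internally consistent and leads to the stated closed-form BERs. A secondary, routine technical point is to confirm that the effective noise at the filter output remains circularly symmetric Gaussian, which guarantees the independence of I- and Q-bit errors exploited in Step~3.
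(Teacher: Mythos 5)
Your proposal is correct and follows essentially the same route as the paper's own proof in Appendix D: it invokes the standard MMSE output-SINR expressions $\gamma_1=\bh_1^H(\bh_2\bh_2^H+\sigma^2\bI)^{-1}\bh_1$ and $\gamma_2=\bh_2^H(\bh_1\bh_1^H+\sigma^2\bI)^{-1}\bh_2$ to reduce each stream to a scalar Gaussian channel, applies the QPSK per-bit formula for $s(n)$, and then performs the identical three-case conditioning on $\hats(n)$ versus $s(n)$ (correct, antipodal, $\pm j$-rotated, with weights $(1-P_{\sf e,s})^2$, $P_{\sf e,s}^2$, and $2P_{\sf e,s}(1-P_{\sf e,s})$) to obtain the BER of $c(n)$. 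Your explicit acknowledgement that the two streams $s_k(n)$ and $s_k(n)c(n)$ are dependent, and that the classical SINR identity is used anyway for consistency with the filter's own design assumption, is a point the paper's proof passes over silently but does not change the argument.
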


\begin{proof}
{{\textcolor{black}{See proofs in Appendix~\ref{proof_Thm_MMSE}.}}}
\end{proof}

\subsubsection{BER Performance for SIC-based Detectors}\label{eq:ber_Sic}
For SIC-based detectors, given $\bH$, we denote the BER of $s(n)$ in the first step by $\tilP_{\sf e, s}(\bH)$,
the BER for detecting $c(n)$ in the second step by $\tilP_{\sf e, c} (\bH)$, and the BER of re-estimating $s(n)$ in the third step by $\tilP_{\sf e, s}^{\star} (\bH)$, respectively. We then have the following theorem.
\begin{mythe}\label{theorem:BERSIC}
Given $\bH$, the BERs of using SIC-based detectors to detect $s(n)$ and $c(n)$ are given as follows
\begin{align}
  \tilP_{\sf e, c} (\bH) &= \left(1- \tilP_{\sf e, s} (\bH)\right)^2 b_1 (\bH) +  \label{eq:BER_s_analysis_sic} \\
  &\quad \tilP_{\sf e, s} (\bH) \left(1- \tilP_{\sf e, s} (\bH)\right) + \tilP_{\sf e, s}^2 (\bH)  b_2 (\bH), \nonumber \\
  \tilP_{\sf e, s}^{\star} (\bH) &\!=\! \left(1-\tilP_{\sf e, c} (\bH) \right) a_1 (\bH) + \tilP_{\sf e, c} (\bH) a_2 (\bH), \label{eq:BER_t_analysis_sic}
\end{align}
where $\tilP_{\sf e, s}(\bH)$ is given in~\eqref{eq:BER_s_analysis_mrc}, \eqref{eq:BER_s_analysis_zf} and \eqref{eq:BER_s_analysis_mmse}, for MRC detector, ZF detector and MMSE detector, respectively, $b_1 (\bH)$ and $b_2 (\bH)$ are given in~\eqref{eq:b1} and~\eqref{eq:b2}, and $a_1 (\bH)$ and $a_2 (\bH)$ are given in~\eqref{eq:a1} and~\eqref{eq:a2}, respectively.
\end{mythe}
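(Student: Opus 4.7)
The plan is to establish both identities by the law of total probability, conditioning on the correctness of the preceding stage of the SIC pipeline. Since the RF source is QPSK and, after the linear detector in the first step, the real and imaginary decisions of $\hats(n)$ are statistically independent (owing to the circular symmetry of the noise $\bu(n)$), the probability that exactly $i$ of the two bits of $\hats(n)$ are in error is $\binom{2}{i}\tilP_{\sf e, s}^{i}(\bH)(1-\tilP_{\sf e, s}(\bH))^{2-i}$. This partitions the sample space into three events: $\hats=s$, $\hats$ differs from $s$ in exactly one bit (so that $s/\hats\in\{\pm j\}$ since $\calA_{\sf s}=(\pm 1\pm j)/\sqrt{2}$), and $\hats=-s$. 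I compute the conditional error probability of $c$ under each event, then sum with the three binomial weights.

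For $\hats=s$, the intermediate signal after direct-link cancellation reduces to $\bv(n)=\bh_2 s(n)c(n)+\bu(n)$; applying $\bt_2=\bh_2^H/\|\bh_2\|^2$ and dividing by $\hats(n)=s(n)$ yields a pure BPSK-in-AWGN statistic whose per-bit error probability is exactly $b_1(\bH)=\calQ(\sqrt{2}\|\bh_2\|/\sigma)$. For $\hats=-s$ the residual contributes a deterministic bias proportional to $\textrm{Re}\{\bh_2^H\bh_1\}/\|\bh_2\|^2$; conditioning on $c=\pm 1$ and averaging produces precisely the two-$\calQ$ form $b_2(\bH)$. For the one-bit-wrong case, the effective signal term $c\cdot s/\hats$ is purely imaginary, so the real-part slicer that recovers the BPSK $c$ sees no $c$-dependent component and the decision is a fair coin flip with probability $\tfrac{1}{2}$. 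Absorbing this $\tfrac{1}{2}$ into the weight $2\tilP_{\sf e, s}(1-\tilP_{\sf e, s})$ yields the middle term $\tilP_{\sf e, s}(1-\tilP_{\sf e, s})$ of~\eqref{eq:BER_s_analysis_sic}, and the remaining two terms come from the $b_1$ and $b_2$ contributions.

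For~\eqref{eq:BER_t_analysis_sic} I condition on whether $\hatc(n)=c(n)$ in step~2. By~\eqref{eq:s_est} the C-RX uses the reconstructed combining vector $\widehat{\bw}(n)=\bh_1+\bh_2\hatc(n)$, while the true equivalent channel carrying $s_k(n)$ is $\tilbh|_{c(n)}=\bh_1+\bh_2 c(n)$ by~\eqref{eq:yn_2}. When $\hatc=c$ the two coincide and the per-bit BER, averaged over $c\in\{\pm 1\}$, is exactly the quantity $a_1(\bH)$ that already appeared in the conditional QPSK analysis of the ML detector in Theorem~1. When $\hatc\neq c$ the combiner is aligned with $\bh_1-\bh_2 c$ while the signal lies along $\bh_1+\bh_2 c$; the resulting in-phase and quadrature gains are the scalars $\theta_{R,i}\pm\theta_{I,i}$ used in~\eqref{eq:a2}, and averaging the four $\calQ$-functions over the QPSK bit pattern and over the sign of $c$ gives $a_2(\bH)$. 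Weighting these two conditional BERs by $1-\tilP_{\sf e, c}$ and $\tilP_{\sf e, c}$ produces~\eqref{eq:BER_t_analysis_sic}.

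The main obstacle is the second-step bookkeeping, where I must (i) justify independence of the I and Q bit errors of $\hats(n)$ after the chosen linear detector so that the trinomial decomposition is valid; (ii) verify that in the one-bit-wrong case the real part of the decision statistic is genuinely free of $c$, exploiting that the QPSK alphabet forces $s/\hats\in\{\pm j\}$; and (iii) track the bias direction in the $\hats=-s$ case so that the sign of $c$ couples correctly to $\textrm{Re}\{\bh_2^H\bh_1\}$ in both terms of $b_2(\bH)$. Once these points are settled, step~3 is essentially a re-use of the MRC-with-possibly-mismatched-channel arguments already developed for the ML proof, and both equations follow by the law of total probability.
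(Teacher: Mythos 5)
Your proposal is correct and follows essentially the same route as the paper: the paper's own proof simply takes $\tilP_{\sf e, s}(\bH)$ from the linear-detector propositions and then invokes ``similar steps as in the proof of Theorem~\ref{theorem:BERML}'', which is precisely the conditional decomposition you spell out (the trinomial split on the number of QPSK bit errors in $\hats(n)$ reusing $b_1, b_2$ and the coin-flip argument for the one-bit-wrong case, followed by conditioning on $\hatc(n)$ and reusing $a_1, a_2$ for the re-estimation step). Your version is in fact more explicit than the paper's, and the independence of the I and Q bit errors you flag as needing justification is the same implicit assumption the paper makes in assigning the weights $(1-\tilP_{\sf e, s})^2$, $\tilP_{\sf e, s}(1-\tilP_{\sf e, s})$ and $\tilP_{\sf e, s}^2$.
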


\begin{proof}
Since conventional linear detector is used in the first step for detecting $s(n)$, the BER $\tilP_{\sf e, s}(\bH)$ is given in~\eqref{eq:BER_s_analysis_mrc}, \eqref{eq:BER_s_analysis_zf} and \eqref{eq:BER_s_analysis_mmse}, for MRC detector, ZF detector and MMSE detector, respectively. Moreover, by using similar steps as in the proof of Theorem~\ref{theorem:BERML}, the BER of detecting $c(n)$ in the second step can be further derived as in~\eqref{eq:BER_s_analysis_sic}, and the BER of re-estimating $s(n)$ in the third step can also be derived as in~\eqref{eq:BER_t_analysis_sic}. This completes the proof.
\end{proof}

The corresponding average BERs are thus $\barP_{\sf e, c} = \bbE_{\bH} \left[ \tilP_{\sf e, c}(\bH)\right]$ and $\barP_{\sf e, s} = \bbE_{\bH} \left[ \tilP_{\sf e, s}^{\star}(\bH)\right]$.

\subsection{Frequency-Selective Fading Channels}\label{sec:analyze_ofdm}
\subsubsection{A-BD Data Rate Performance}
Recall the A-BD symbol period is designed to equal the OFDM symbol period which consists of $(N+N_{\sf c})$ sampling periods, the A-BD data rate is obtained as
\begin{align}\label{eq:rate_ofdm}
  R_{\sf A-BD} = \frac{f_s}{N+N_{\sf c}}.
\end{align}

\subsubsection{BER Performance for ML Detector}\label{sec:analyze_ofdm_ML}
Due to large spreading gain $N$ of detecting $c(n)$ in the second step in~\eqref{eq:ML_ofdm_step2}, the BER of $c(n)$ is in general small, which will be numerically shown in Section~\ref{sec:simulations_ofdm}. We thus focus on the BER of $\bs(n)$ in this subsection.

Denote the composite channel matrix $\calH = \diag\{\bH_0, \ldots, \bH_{N-1}\}$, where the channel matrix for each subcarrier $k$ is $\bH_k = [\bh_{{\sf d}, k}, \ \bh_{{\sf b}, k}]$. For ML detector, we have the following theorem on the BER.



\begin{mythe}\label{theorem:BERML_ofdm}
Given $\calH$ and the BER of $c(n)$ denoted by $P_{\sf e, c} (\calH)$ , the BER of using ML detector to detect $\bs(n)$, denoted by $P_{\sf e, s} (\calH)$, is given as follows
\begin{align}
  P_{\sf e, s} (\calH) &= (1-P_{\sf e, c} (\calH) ) \tila_1 (\calH) + P_{\sf e, c} (\calH) \tila_2 (\calH), \label{eq:BER_s_analysis_ofdm}
\end{align}
where the coefficients each of which represents the BER of $\bs(n)$ for certain condition, are given by
\begin{align}
\tila_1 (\calH) &\!=\! \frac{1}{2N} \! \sum \limits_{k=0}^{N-1} \left[ \! Q \left( \! \! \frac{ \| \bh_{{\sf d}, k} \!+\! \bh_{{\sf b}, k} \|}{\sigma^2} \! \! \right) \!+\! Q \left( \! \! \frac{ \| \bh_{{\sf d}, k} \!-\! \bh_{{\sf b}, k} \|}{\sigma} \! \! \right) \! \right],  \label{eq:a1_ofdm}
\end{align}
\begin{align}
    &\tila_2 (\calH) = \nonumber \\
    &\frac{1}{4N} \! \sum \limits_{k=0}^{N-1} \! \Bigg[ \! \calQ \left(\! \frac{\| \bh_{{\sf d}, k} \!-\! \bh_{{\sf b}, k}\|(\theta_{R,1}(\bH_k) \!+\! \theta_{I,1}(\bH_k))}{\sigma}\!\right) \!+\!...\nonumber \\
    &\quad \calQ \left( \frac{\| \bh_{{\sf d}, k} -\bh_{{\sf b}, k}\|(\theta_{R,1}(\bH_k) - \theta_{I,1}(\bH_k))}{\sigma}\right) \!+\!... \nonumber \\
   &\quad \calQ \left( \frac{\| \bh_{{\sf d}, k} +\bh_{{\sf b}, k}\|(\theta_{R,2}(\bH_k) + \theta_{I,2}(\bH_k))}{\sigma}\right) \!+\!...\nonumber \\
   &\quad \calQ \left( \frac{\| \bh_{{\sf d}, k} +\bh_{{\sf b}, k}\|(\theta_{R,2}(\bH_k) - \theta_{I,2}(\bH_k))}{\sigma}\right) \Bigg],\label{eq:a2_ofdm}
\end{align}
where $\theta_{R,1} (\bH_k) \!=\! \textrm{\em Re} \left\{ \frac{(\bh_{{\sf d}, k}-\bh_{{\sf b}, k})^H (\bh_{{\sf d}, k}+\bh_{{\sf b}, k})}{\| \bh_{{\sf d}, k}-\bh_{{\sf b}, k}\|^2} \right\}, \theta_{I,1} (\bH_k)$ $=\textrm{\em Im} \left\{ \frac{(\bh_{{\sf d}, k}-\bh_{{\sf b}, k})^H (\bh_{{\sf d}, k}+\bh_{{\sf b}, k})}{\| \bh_{{\sf d}, k}-\bh_{{\sf b}, k}\|^2} \right\}$, $\theta_{R,2}(\bH_k) \!=\! \textrm{\em Re} \left\{ \frac{(\bh_{{\sf d}, k}+\bh_{{\sf b}, k})^H (\bh_{{\sf d}, k}-\bh_{{\sf b}, k})}{\| \bh_{{\sf d}, k}+\bh_{{\sf b}, k}\|^2} \right\}$, and $\theta_{I,2}(\bH_k) \!=\! \textrm{\em Im} \left( \frac{(\bh_{{\sf d}, k}+\bh_{{\sf b}, k})^H (\bh_{{\sf d}, k}-\bh_{{\sf b}, k})}{\| \bh_{{\sf d}, k}+\bh_{{\sf b}, k}\|^2} \right)$.
\end{mythe}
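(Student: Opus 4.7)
The plan is to reduce the OFDM case to $N$ independent flat-fading problems by exploiting the per-subcarrier decoupling in the frequency-domain model~\eqref{eq:zk_1}: conditional on a decision $\hat c(n)$, the MRC step~\eqref{eq:ML_ofdm_step1} treats each subcarrier $k$ as a scalar receive model with effective channel $\tilbh_k|_{\hat c(n)} = \bh_{{\sf d},k} + \hat c(n)\bh_{{\sf b},k}$, which is structurally identical to the single-symbol flat-fading problem underlying Theorem~\ref{theorem:BERML}. The argument is therefore a per-subcarrier replay of that analysis, tied together by a total-probability split on the correctness of $\hat c(n)$, exactly as the SIC BER in Theorem~\ref{theorem:BERSIC} was assembled from $a_1,a_2$ via an outer conditioning on the $c(n)$ decision.

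I would first write $P_{\sf e,s}(\calH) = \frac{1}{N}\sum_{k=0}^{N-1} P_{\sf e,s,k}(\calH)$, where $P_{\sf e,s,k}$ is the QPSK BER at subcarrier $k$. Then I condition on the event $\calE\triangleq\{\hat c(n) = c(n)\}$, which has probability $1-P_{\sf e,c}(\calH)$ by definition of $P_{\sf e,c}(\calH)$. Since $c(n)$ is the same symbol across all subcarriers, this conditioning factors out of the subcarrier average and yields the outer convex combination in~\eqref{eq:BER_s_analysis_ofdm}, with $\tilde a_1(\calH)$ equal to the subcarrier-averaged BER on $\calE$ and $\tilde a_2(\calH)$ on $\calE^c$.

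On $\calE$, the filter at subcarrier $k$ is matched to the true effective channel, so the decision statistic for $s_k(n)$ is a unit-gain QPSK symbol in AWGN of variance $\sigma^2/\|\tilbh_k|_{c(n)}\|^2$, giving a standard QPSK error $Q(\|\tilbh_k|_{c(n)}\|/\sigma)$. Averaging over the equiprobable values $c(n)\in\{\pm 1\}$ and over $k$ reproduces~\eqref{eq:a1_ofdm}, mirroring the derivation of $a_1$ in~\eqref{eq:a1}. On $\calE^c$, the filter uses the wrong sign of $\bh_{{\sf b},k}$, so the signal is multiplied by the complex gain $\theta_{R,1}(\bH_k)+j\theta_{I,1}(\bH_k)$ when the true $c(n)=+1$ and by $\theta_{R,2}(\bH_k)+j\theta_{I,2}(\bH_k)$ when $c(n)=-1$. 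Expanding each complex gain against the QPSK symbol $(\pm 1 \pm j)/\sqrt 2$ and taking real/imaginary parts yields in-phase and quadrature decision means proportional to $\theta_{R,j}\pm\theta_{I,j}$; the associated Gaussian tail probabilities, with the noise variance $\sigma^2/\|\tilbh_k|_{\hat c(n)}\|^2$ absorbed into the $Q$-arguments, produce exactly the four $\calQ$-function terms in~\eqref{eq:a2_ofdm} after averaging over the two equiprobable true $c(n)$ values and the four equiprobable QPSK points.

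The main obstacle is the mismatched-filter bookkeeping: I need to track carefully which combinations of (true $c(n)$, true $s_R$, true $s_I$, in-phase versus quadrature decision) map to which of the two arguments $\theta_R\pm\theta_I$, so that the four surviving distinct terms in $\tilde a_2(\calH)$ are not miscounted. A related subtlety is verifying that $P_{\sf e,c}(\calH)$ may legitimately be treated as subcarrier-independent when factoring the convex combination; this is justified because the second stage~\eqref{eq:ML_ofdm_step2} produces a single $\hat c(n)$ from all subcarriers jointly, so the event $\calE$ is common across $k$ even though the per-subcarrier noises are independent. Once these points are settled, the remainder of the proof is the subcarrier-wise repetition of the flat-fading derivation used for $a_1$ and $a_2$ in Theorem~\ref{theorem:BERML}.
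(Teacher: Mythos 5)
Your proposal matches the paper's own (very terse) proof: the paper simply states that the result follows by repeating the Theorem~\ref{theorem:BERML} derivation of $a_1$ and $a_2$ on each subcarrier and averaging over the $N$ subcarriers, with the outer conditioning on the correctness of the single joint decision $\hatc(n)$ exactly as in~\eqref{eq:BER_Ps_Pc}. Your filled-in version of that plan, including the observation that the event $\{\hatc(n)=c(n)\}$ is common to all subcarriers so the convex combination factors out of the subcarrier average, is correct and is essentially the same argument.
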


\begin{proof}
The BER for each subcarrier $k$ can be proved by using similar steps as in Appendix~\ref{proof_Thm_ML} for proving Theorem~\ref{theorem:BERML}. The overall BER is obtained by averaging over all the $N$ subcarriers. The detailed proof is omitted herein, due to space limitation.
\end{proof}


By taking the expectation over the channel $\calH$, the corresponding average BERs are obtained as $\bar{P}_{\sf e, s} = \bbE_{\calH} \left[P_{\sf e, s} (\calH)\right]$.

\section{Numerical Results}\label{sec:simulation}
In this section, we provide simulation results to evaluate the performance of the proposed detectors for the CABC system. {{\textcolor{black}{We assume that the channel coefficients $f_m$'s, $v$ and $g_m$'s are mutually independent and Rayleigh fading, each of which follows a complex Gaussian distribution with zero mean and some specific variance $\beta_{\sf f}, \ \beta_{\sf v}$, or $\beta_{\sf g}$. We further assume that the channel coefficient variances $\beta_{\sf f} = 10^{-7}$ and $\beta_{\sf v}=10^{-5}$ for larger RF-source-to-C-RX distance and RF-source-to-A-BD distance. The channel coefficient variance $\beta_{\sf g}$,}}} which relates to the short A-BD-to-C-RX distance, is determined by the relative SNR $\Delta \gamma$. \black{Let the backscatter efficiency $\alpha=0.2+0.3j$}. The RF source adopts QPSK modulation, while the A-BD adopts BPSK modulation. The number of antennas at the C-RX is $M=4$, and \textcolor{black}{$10^8$} channel realizations are simulated for average BER evaluation.


\subsection{Flat Fading Channel}
\subsubsection{BER Comparison for ML Detector}\label{subsec:simula_flat_ML}
In this subsection, we first compare the BER performance of the proposed CABC system by using ML detector to that of conventional direct-link SIMO system, for fixed ratio of symbol period $K=1$.


%

Fig.~\ref{fig:Fig2} plots the BERs of the RF-source signal $\bs(n)$ and the A-BD signal $c(n)$ versus the direct-link SNR $\gamma_{\sf d}$ by using the proposed ML detector, respectively. We observe that the proposed low-complexity ML detector achieves the same performance as the original ML detector in~\eqref{eq:ML_flat} which is termed as joint ML detector in Fig.~\ref{fig:Fig2}, and the analytical BERs coincide with the simulated BERs. Hence, we use the low-complexity ML detector in the rest of this section. From Fig.~\ref{fig:Fig2}(a), we observe that the existence of a backscattering A-BD benefits the detection of $\bs(n)$, compared to the conventional direct-link SIMO system. For instance, at a BER level of $10^{-5}$, the CABC system achieves an SNR gain of around 1 dB, for the case of $\Delta \gamma =-10$ dB. The achieved SNR gain becomes larger as the relative SNR $\Delta \gamma$ increases (i.e., the strength of backscattered signal increases). From Fig.~\ref{fig:Fig2}(b), we observe that the BER performance improves as $\Delta \gamma$ increases. Specifically, at a BER level of $10^{-2}$, the CABC system achieves an SNR gain of around 9 dB for $\Delta \gamma =0$ dB, compared to $\Delta \gamma =-10$ dB.


\begin{figure}[!t]
\centering
\subfigure[BERs of RF-source signal $\bs(n)$] {\includegraphics[height=2.5in,width=3.3in,angle=0]{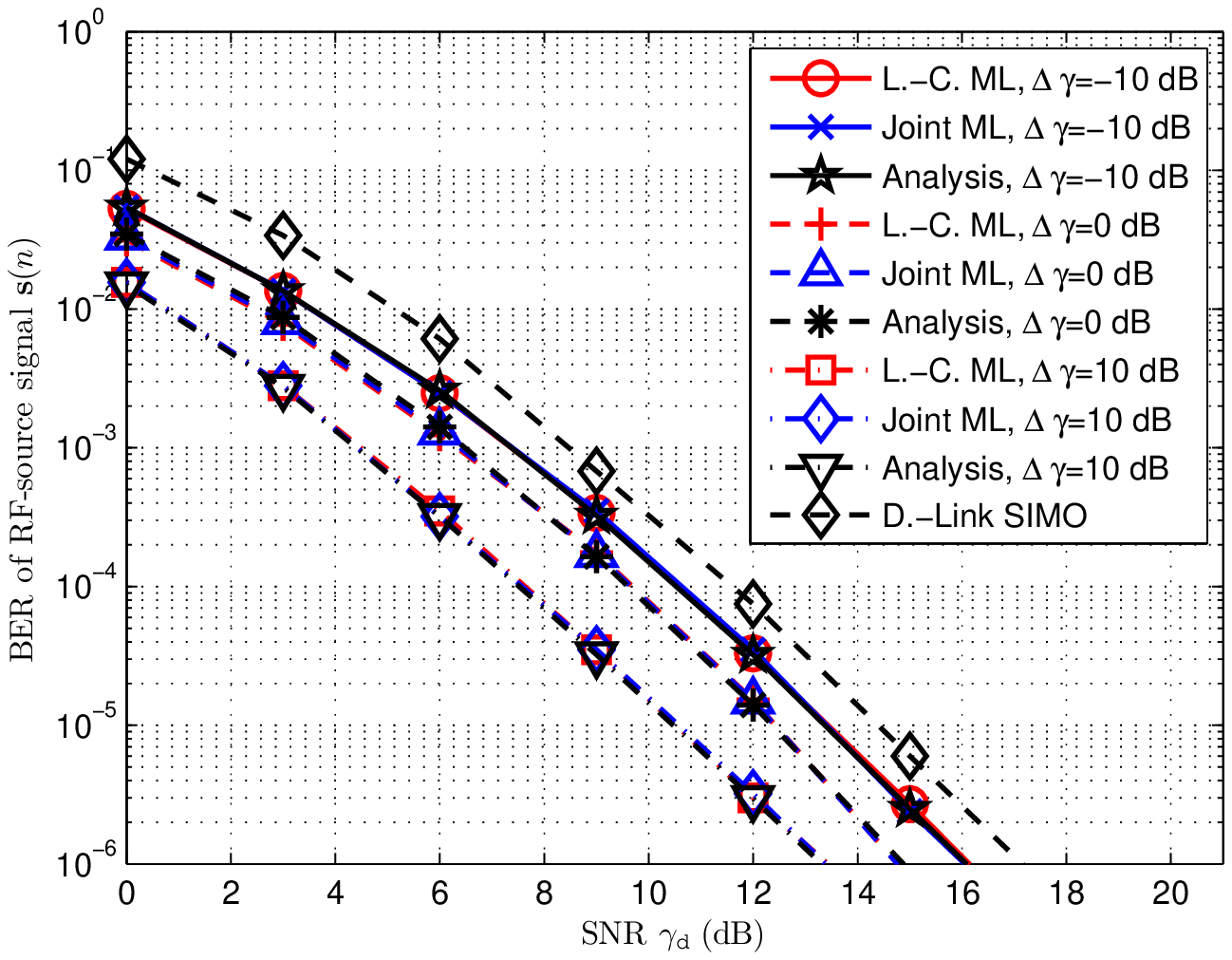}}
\subfigure[BERs of A-BD signal $c(n)$] {\includegraphics[height=2.5in,width=3.3in,angle=0]{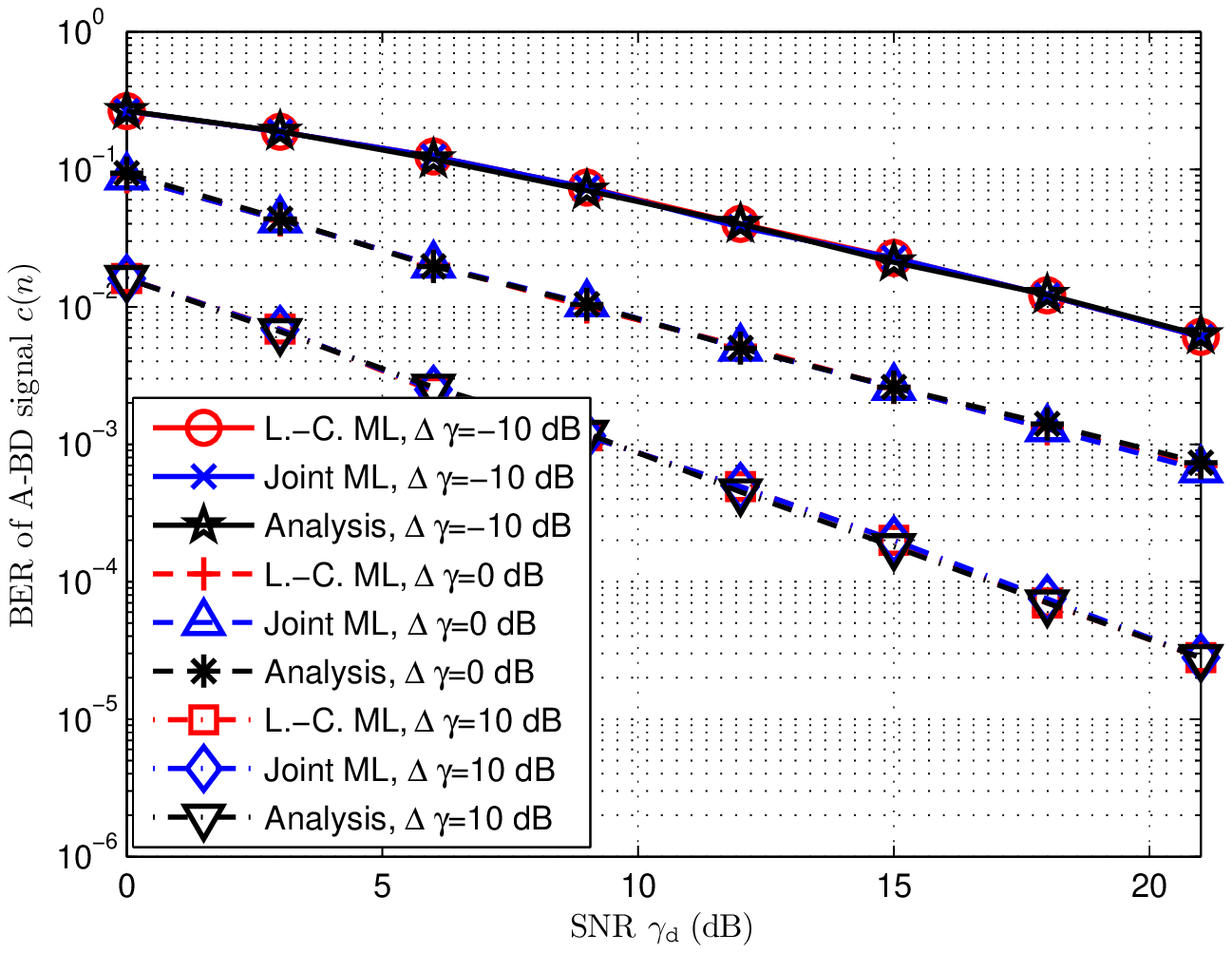}}
\caption{BERs of RF-source signal $\bs(n)$ and A-BD signal $c(n)$ by using ML detector, for $K=1$, {{\textcolor{black}{under flat fading channel}.}}}
\label{fig:Fig2}
\end{figure}

Then, we investigate the impact of the ratio of symbol period $K$ on the BER performance. Fig.~\ref{fig:Fig2Diffk}(a) plots the BERs of $\bs(n)$ versus $\gamma_{\sf d}$ by using ML detector for different $K$'s and different $\Delta \gamma$'s. Given $\Delta \gamma$, the BER performance of $\bs(n)$ improves as $K$ increases. The improvement becomes smaller for larger $K$, which implies that a smaller $K$ is preferable in practice for purpose of higher detection reliability and A-BD data rate.



Fig.~\ref{fig:Fig2Diffk}(b) plots the BERs of $c(n)$ versus the direct-link SNR $\gamma_{\sf d}$ by using ML detector for different $K$'s and different $\Delta \gamma$'s. We observe that for given $\Delta \gamma$, the BER performance of $c(n)$ improves as $K$ increases. In particular, we observe that when $K$ increases by two times, the BER performance of $c(n)$ achieves an SNR gain of around 3 dB, which represents the spreading gain of detecting $c(n)$, but the data rate of the A-BD decreases by half. This observation implies that there is a tradeoff between the data rate and reliability of transmitting $c(n)$ in the proposed CABC system.

\begin{figure}[!t]
\centering
\subfigure[BERs of RF-source signal $\bs(n)$] {\includegraphics[height=2.5in,width=3.3in,angle=0]{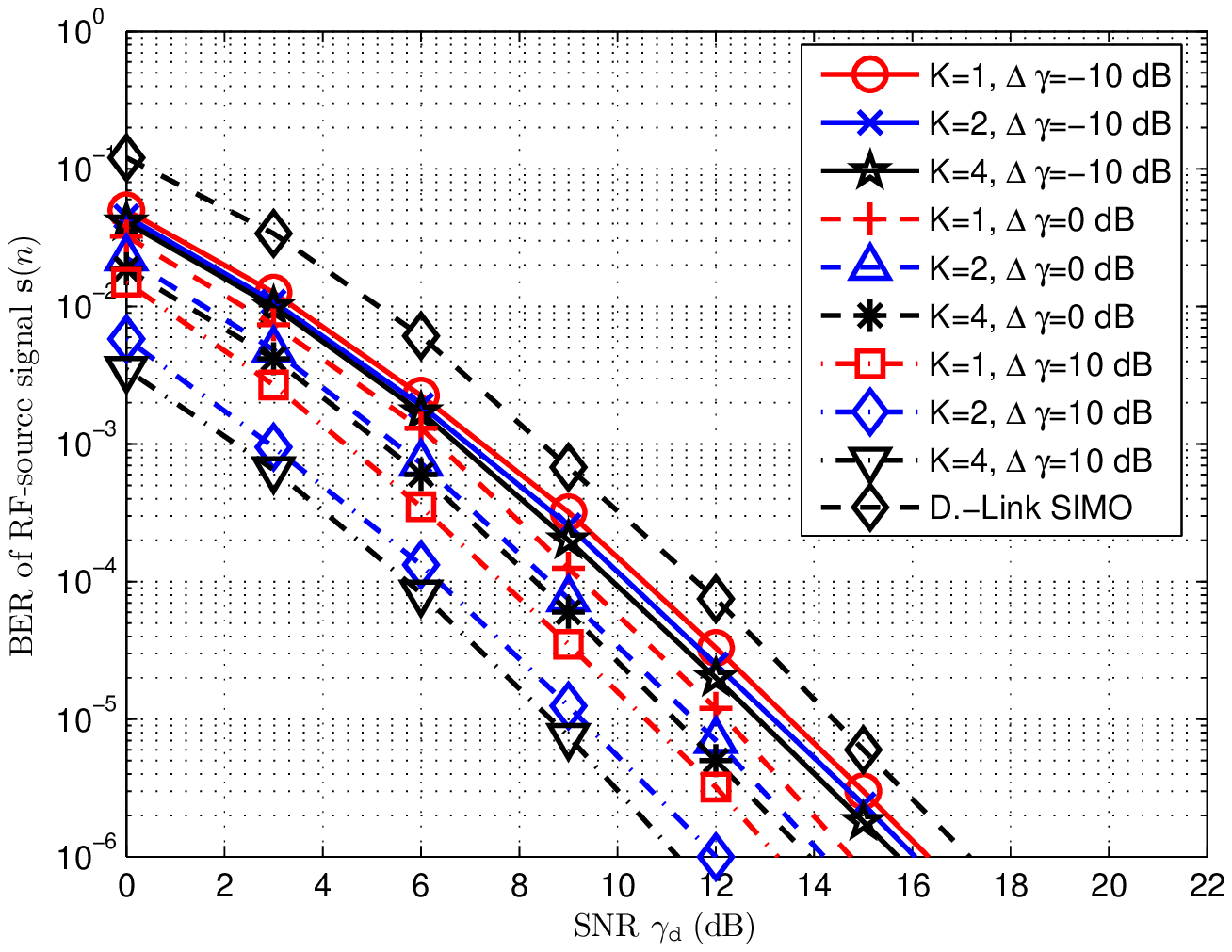}}
\subfigure[BERs of A-BD signal $c(n)$] {\includegraphics[height=2.5in,width=3.3in,angle=0]{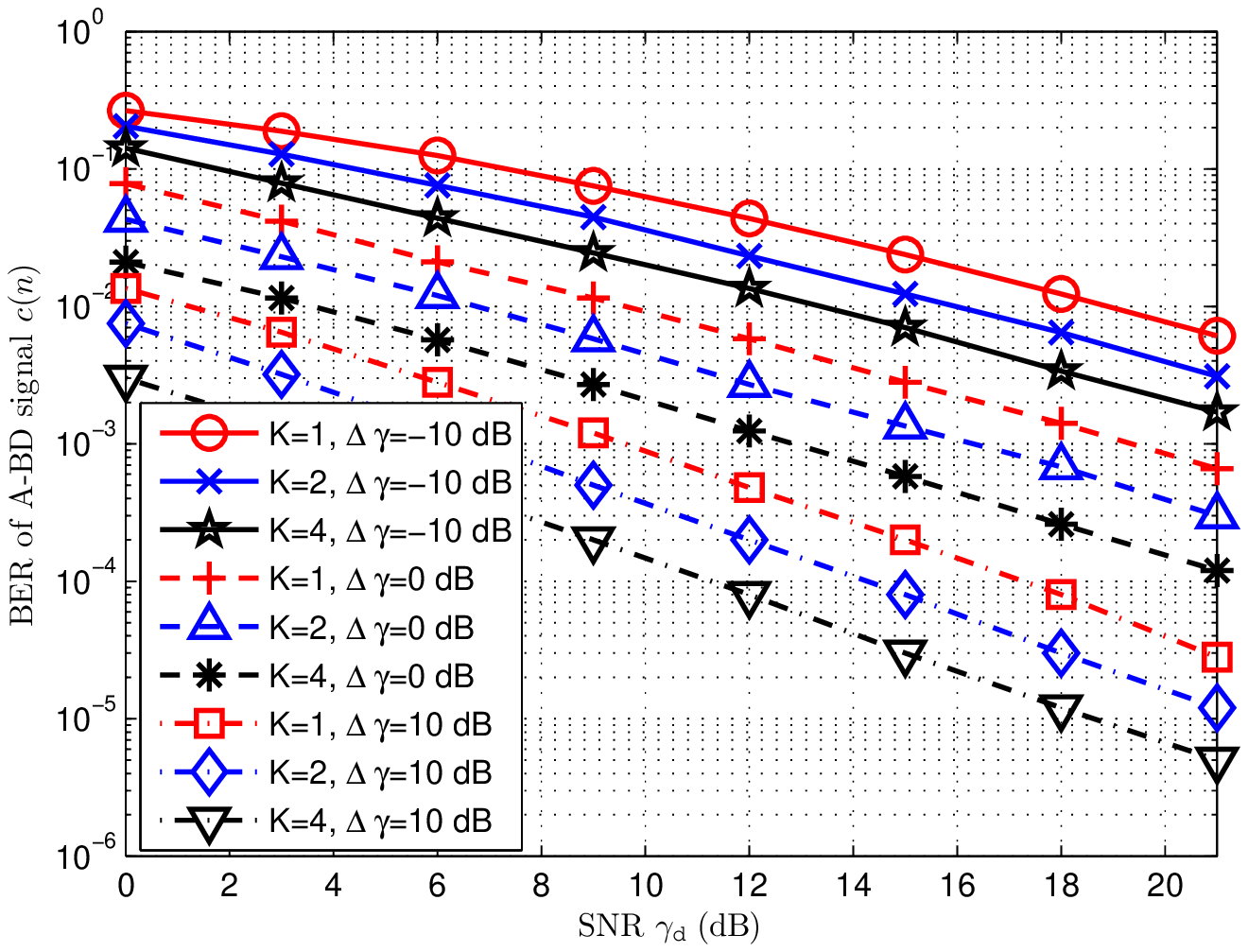}}
\caption{BERs of $\bs(n)$ and $c(n)$ by using ML detector, for different $K$'s, {{\textcolor{black}{under flat fading channel}}}.}
\label{fig:Fig2Diffk}
\end{figure}

\subsubsection{BER Comparison for Suboptimal Detectors}
Let $K=1$. We fix $\Delta \gamma=-10$ dB, i.e., the direct-link signal is 10 times as strong as the backscattered signal, which is typical in practice.

Fig.~\ref{fig:Fig3} compares the BERs of the RF-source signal $\bs(n)$ and the A-BD signal $c(n)$ by using different detectors, respectively. From Fig.~\ref{fig:Fig3}(a), we have the following observations. First, both MRC detector and MRC-SIC detector suffer from BER floors, since the two detectors treat the backscattered signal as interference, and the detection error of A-BD signal $c(n)$ leads to additional interference for detecting $\bs(n)$ using MRC-SIC detector. Second, the BERs for other detectors decrease as the direct-link SNR $\gamma_{\sf d}$ increases. Third, ZF detector suffers from higher BER than MMSE detector. This is explained as follows. Since the direct-link signal is 10 times stronger than the backscattered signal, the optimal detector approximates MRC detector constructed from the direct-link channels. However, ZF detector is constructed to force the interference between the two signals to zero, and thus deviates from the optimal detector. Forth, we observe that the MMSE-SIC detector achieves near-ML detection performance.

\begin{figure}[!t]
\centering
\subfigure[BERs of RF-source signal $\bs(n)$] {\includegraphics[height=2.5in,width=3.3in,angle=0]{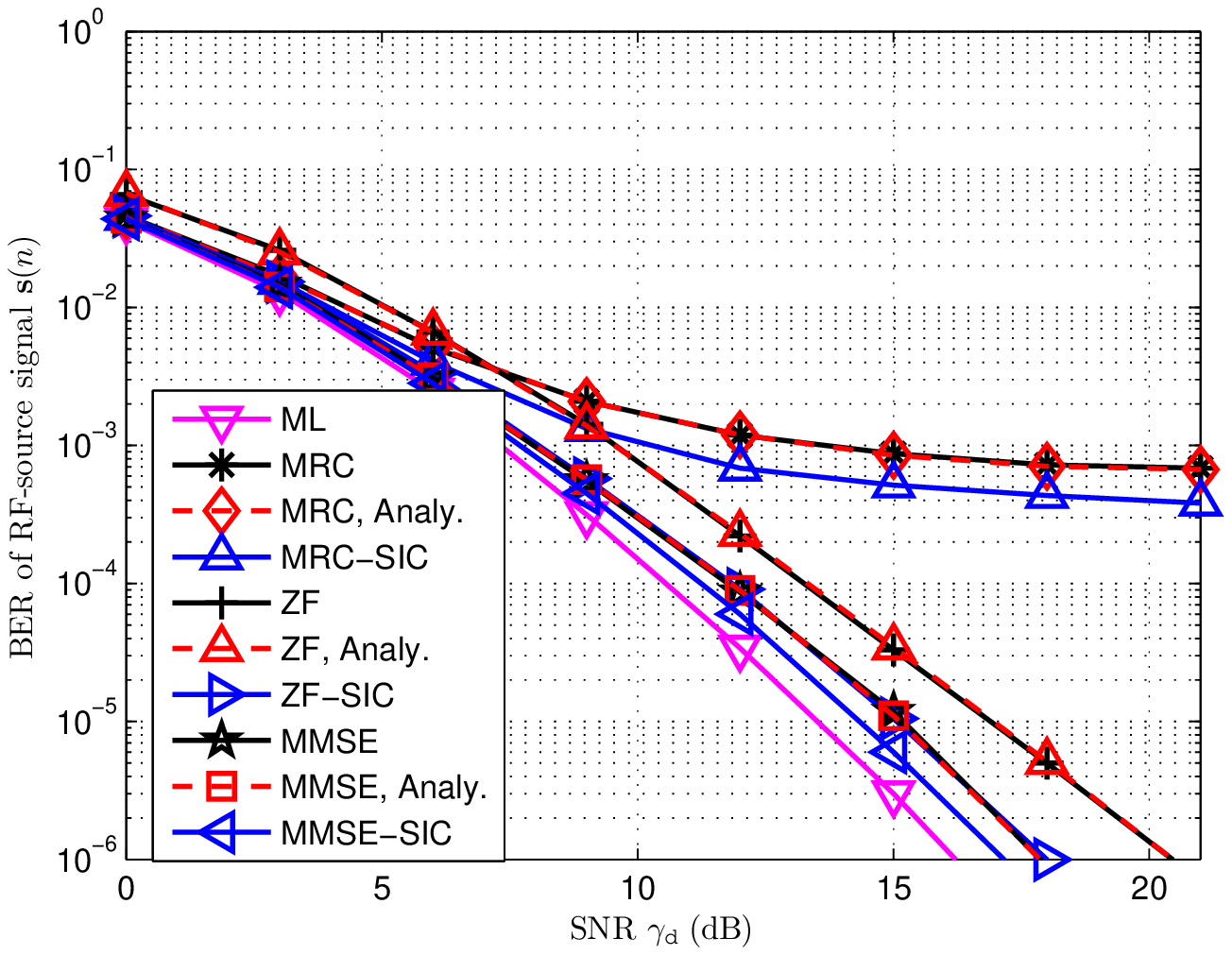}}
\subfigure[BERs of A-BD signal $c(n)$] {\includegraphics[height=2.5in,width=3.3in,angle=0]{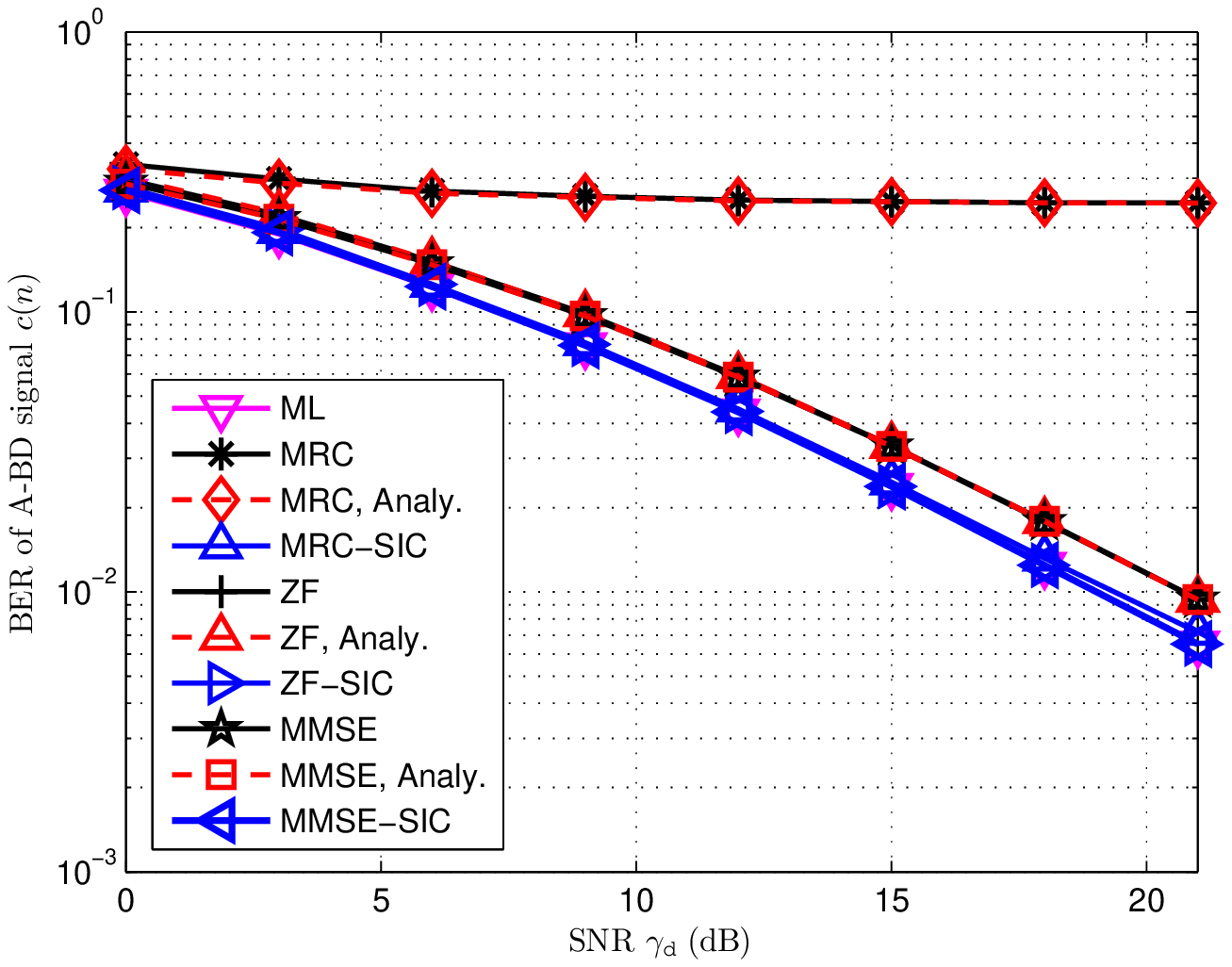}}
\caption{{{\textcolor{black}{BERs of $\bs(n)$ and $c(n)$ by using all detectors, for $K=1$ and $\Delta \gamma=-10$ dB, under flat fading channel}.}}}
\label{fig:Fig3}
\end{figure}


From Fig.~\ref{fig:Fig3}(b), we have two observations. First, the performance of MRC detector is poor, and the SIC-based ZF and MMSE detectors have a lower BER than the conventional ZF and MMSE detectors. In particular, at a BER level of $10^{-2}$, the SIC-based detectors achieve an SNR gain of 1.7 dB compared to the ZF and MMSE detectors. Second, the proposed SIC-based detectors achieve almost the same BER performance as the optimal ML detector.


%


\subsection{Frequency-selective Fading Channel}\label{sec:simulations_ofdm}
For frequency-selective fading channels, we set the numbers of multi-paths $L_{\sf f}=L_{\sf v}=8$, and $L_{\sf g}=1$ due to short A-BD-to-C-RX distance in practice. For OFDM modulation, we set $N=64$ and $N_{\sf c}=16$.

%



\subsubsection{Case of zero delay at C-RX and perfect synchronization at A-BD}\label{sec:simulations_ofdm_1}
In this subsection, we consider the case in which the additional delay at the C-RX $d=0$ and the transmission of A-BD signal is perfectly synchronized to the arrival of OFDM signal at the A-BD.

Fig.~\ref{fig:ofdm_cs}(a) compares the BERs of the RF-source signal $\bs(n)$ for ML detection in the CABC system and the direct-link OFDM system without an A-BD. First, we observe that for the CABC system, the BER performance of $\bs(n)$ improves as the backscatter-link power increases (i.e., as the relative SNR $\Delta \gamma$ increases). Second, we observe that the BER performance of $\bs(n)$ for the CABC system is in general better than that for the direct-link OFDM system, even for the case in which the backscatter-link power is only 10\% of the direct-link power (i.e., $\Delta \gamma=-10$ dB). This observation verifies that the existence of an A-BD can enhance the ML detection performance of the RF-source signal, especially for the scenario of higher backscatter-link power. Third, we observe that the simulated BERs of $\bs(n)$ coincide with the analytical BERs which are computed by using Theorem~\ref{theorem:BERML_ofdm} and the simulated BER of $c(n)$ shown in Fig.~\ref{fig:ofdm_cs}(b). Thus this verifies Theorem~\ref{theorem:BERML_ofdm}.

\begin{figure}[!t]
\centering
\subfigure[BERs of RF-source signal $\bs(n)$] {\includegraphics[height=2.5in,width=3.3in,angle=0]{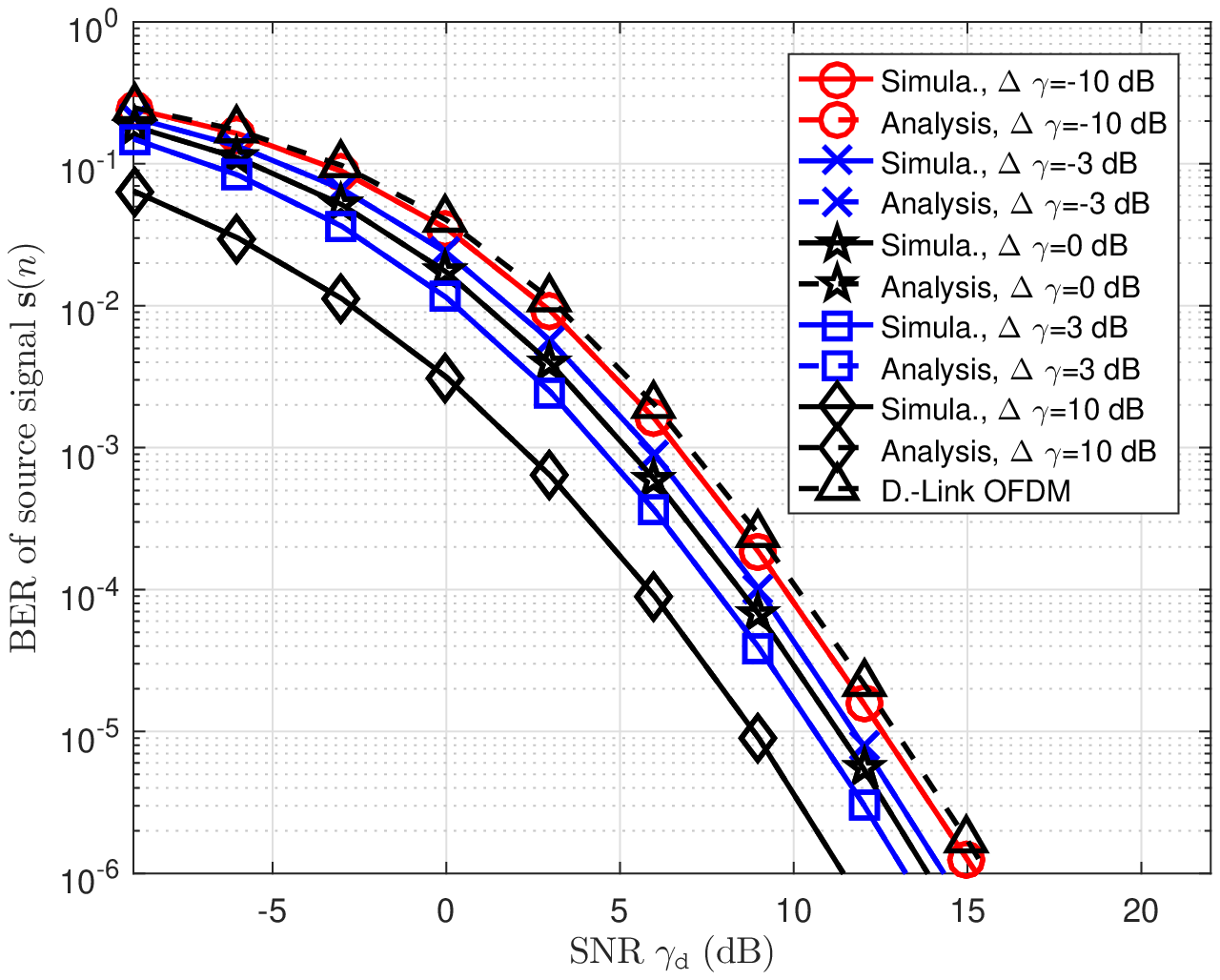}}
\subfigure[BERs of A-BD signal $c(n)$] {\includegraphics[height=2.5in,width=3.3in,angle=0]{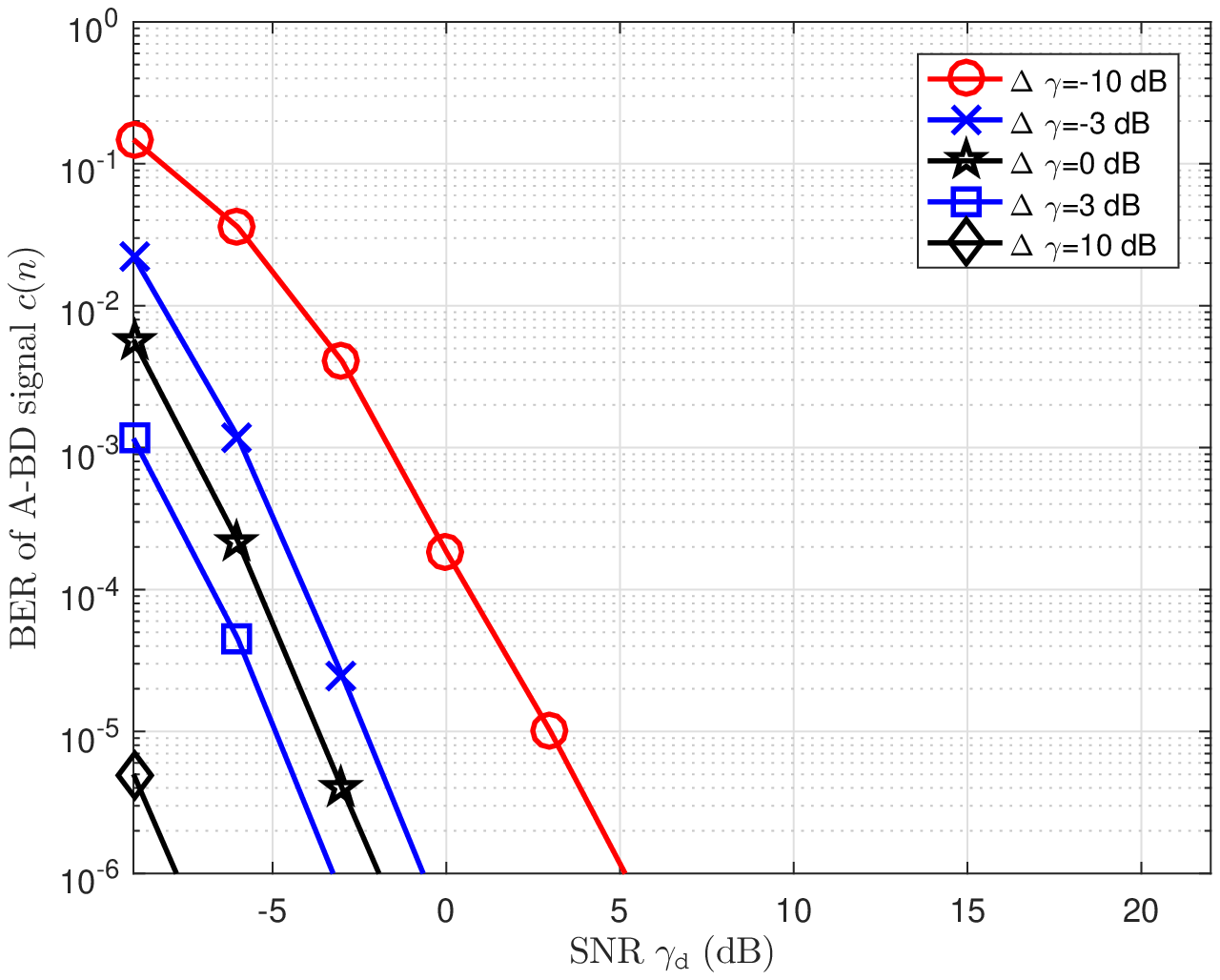}}
\caption{BERs of $\bs(n)$ and $c(n)$ for different $\Delta \gamma$'s, {{\textcolor{black}{under frequency-selective fading channel}}}.}
\label{fig:ofdm_cs}
\end{figure}

Fig.~\ref{fig:ofdm_cs}(b) compares the BERs of the A-BD signal $c(n)$ for the CABC system with different relative SNRs $\Delta \gamma$'s. First, we observe that the BER of $c(n)$ decreases very quickly as the SNR $\gamma_{\sf d}$ increases, compared to that in Fig.~\ref{fig:Fig2Diffk}(b) with flat fading channels. For instance, at a BER level of $10^{-2}$, the BER performance achieves an SNR gain of around 22 dB for $\Delta \gamma =-10$ dB, compared to the BER for $K=1$ with flat fading channels (i.e., the red solid line with circle maker  in Fig.~\ref{fig:Fig2Diffk}(b). This can be explained as the spreading gain and diversity gain with frequency-selective fading channels. Second, we observe that the BER is lower for higher backscatter-link power (i.e., larger relative SNR $\Delta \gamma$).

\subsubsection{Case of delay at C-RX and imperfect synchronization at A-BD}\label{sec:simulations_ofdm_2}
In this subsection, we consider the case in which the additional delay at the C-RX $d>0$ and the transmission of the A-BD signal is delayed by $d_0 \ ( d_0 \geq 0)$ compared to the arrival of OFDM signal at the A-BD.

Fig.~\ref{fig:ofdm_cs_wdelay} compares the BERs of $\bs(n)$ and $c(n)$ for ML detection in the CABC system. First, we consider the scenario of $d_0=0$. For $d \leq d_{\max}=9$, the BER performances of both $\bs(n)$ and $c(n)$ are the same as those for the case of $d =0$; while for $d > d_{\max}$, the BERs are higher than those for the case of $d =0$. These observation verify that when the CP length $N_{\sf c}$ is sufficient to cover the total channel delay, the proposed ML detector for CABC system is optimal; but when the $N_{\sf c}$ is not such sufficient, the BER performances suffer from both the IBI and its resulting ICI due to a loss of orthogonality among subcarriers. Then, we investigate the scenario of $d_0>0$. Compared to the scenario of $d_0=0$, the BER performance of $\bs(n)$ is obviously worsen, but the BER of $c(n)$ is only increased slightly. The robustness of detecting $c(n)$ comes from the spreading gain and diversity gain with frequency-selective fading channels.

\begin{figure}[!t]
\centering
\subfigure[BERs of RF-source signal $\bs(n)$] {\includegraphics[height=2.5in,width=3.3in,angle=0]{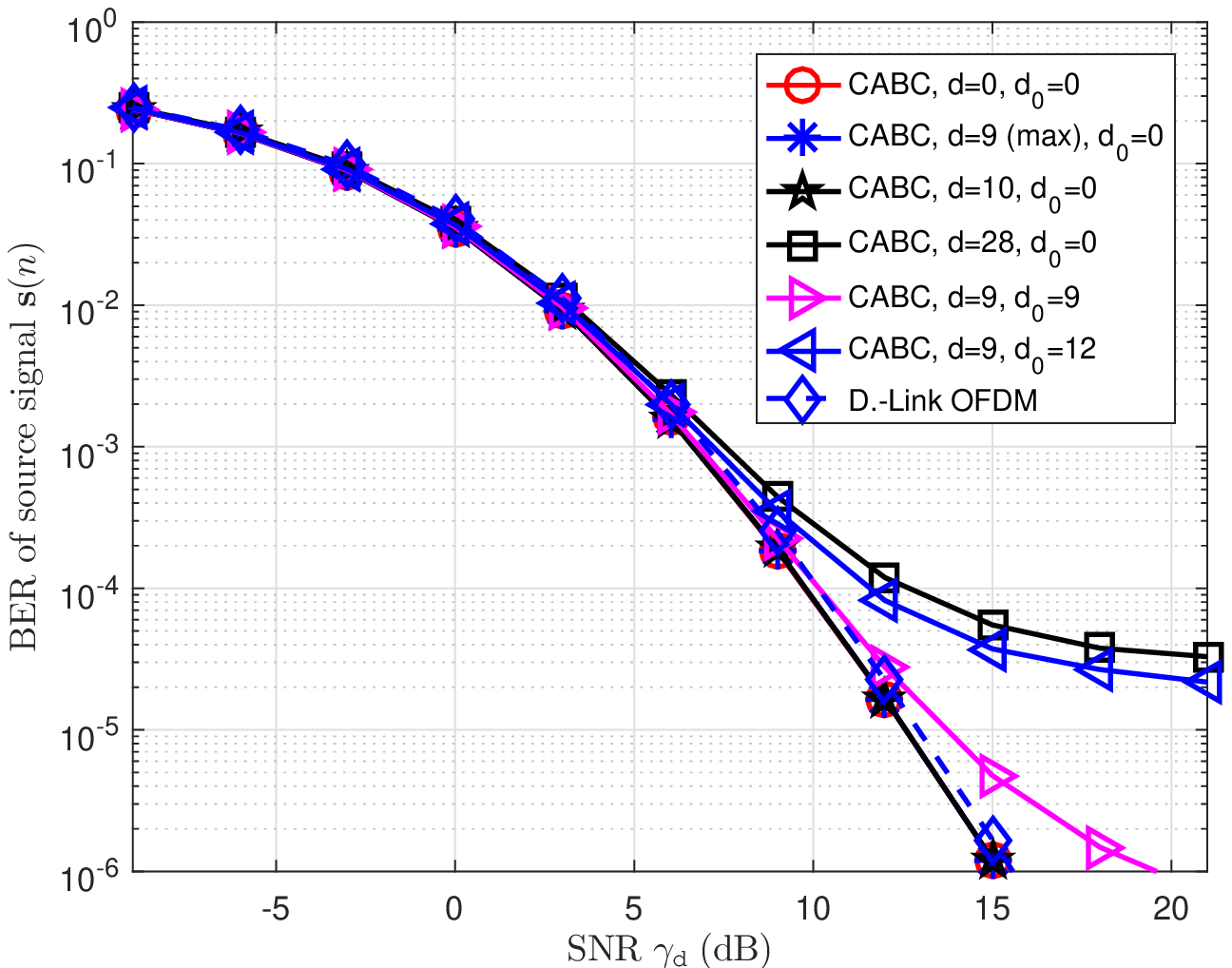}}
\subfigure[BERs of A-BD signal $c(n)$] {\includegraphics[height=2.5in,width=3.3in,angle=0]{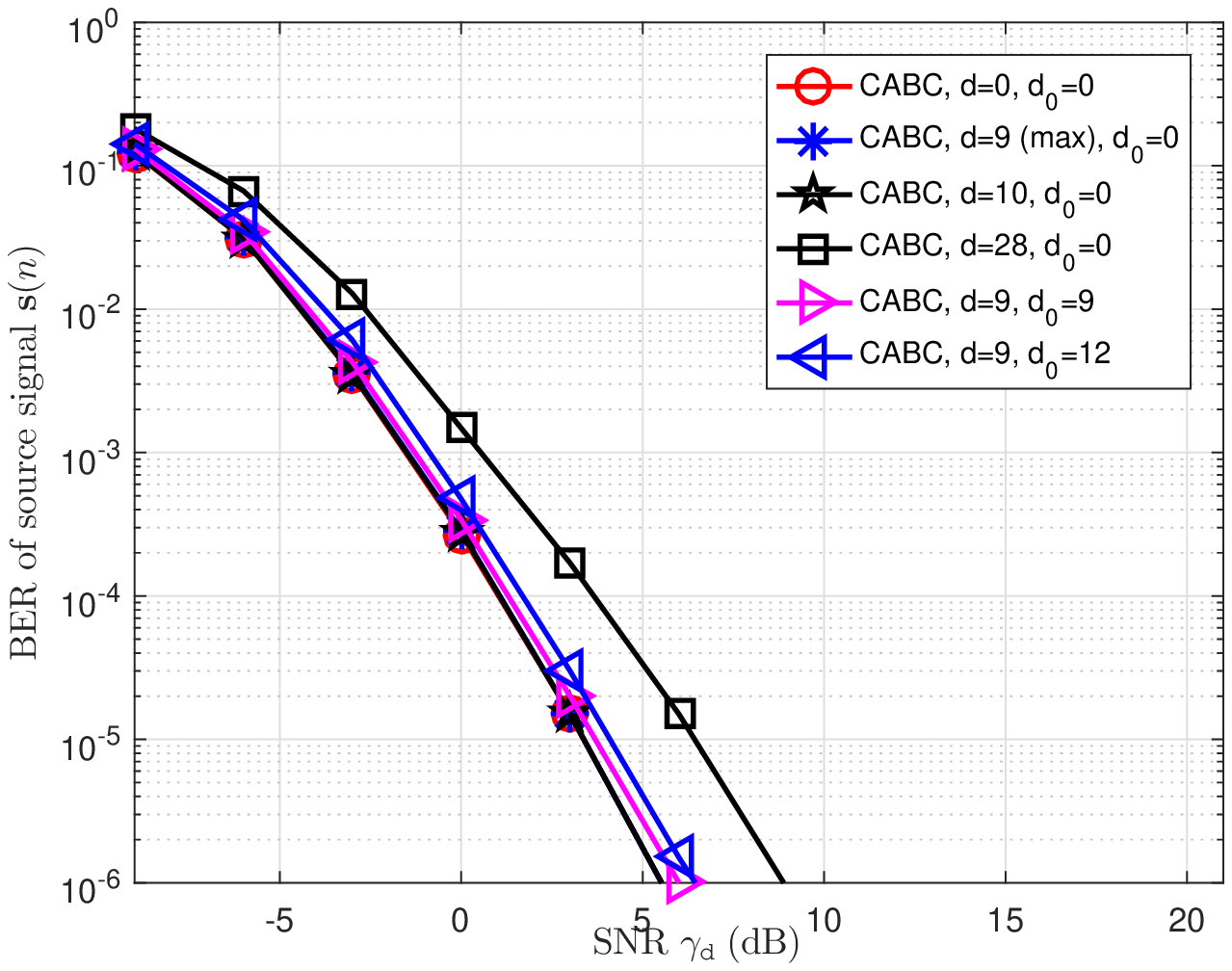}}
\caption{BERs of $\bs(n)$ and $c(n)$ for cases with delays, {{\textcolor{black}{under frequency-selective fading channel}}}.}
\label{fig:ofdm_cs_wdelay}
\end{figure}

\section{Conclusions}\label{conslusion}
This paper has proposed a novel cooperative AmBC (CABC) system, in which a cooperative receiver is designed to recover signals from both the RF source and the A-BD. For flat fading channels, by exploiting the structural property of the derived system model, we have proposed the optimal ML detector, suboptimal linear detectors and SIC-based detectors. For frequency-selective fading channels, we have proposed a low-complexity optimal ML detector for the CABC system over ambient OFDM carriers. Moreover, for both kinds of fading channels, we have obtained the BER expressions in closed forms for the proposed detectors. Numerical results have verified that, when the A-BD and RF-source signals have equal symbol period, the proposed SIC-based detectors can achieve near-ML detection performance for typical application scenarios in which the backscattered signal power is lower than the RF-source signal power; and when the A-BD symbol period is longer than the RF-source symbol period, the existence of backscattered signal in the CABC system can significantly enhance the ML detection performance of the RF-source signal compared to a conventional SIMO system without an A-BD, and the detection performance of the A-BD signal benefits from the spread spectrum gain due to longer A-BD symbol period. {{\textcolor{black}{The designed CABC system has a great potential for applications in future green IoT systems like smart homes and wearable sensor networks.}}}

\appendices
\section{Proof of Theorem~\ref{theorem:BERML}}\label{proof_Thm_ML}
\begin{proof}
Denote $\hats_0(n)$ and $\hatc_0(n)$ as the ML solutions of $s(n)$ and $c(n)$ for \eqref{eq:yn_1}, respectively. Since $\hats_0(n)$ and $\hatc_0(n)$ are mutually related based on \eqref{eq:ML_flat} for $K=1$, we discuss the estimation of $\hats_0(n)$ from $\hatc_0(n)$, and the estimation of $\hatc_0(n)$ from $\hats_0(n)$, respectively.

\subsubsection{Estimating $\hats_0(n)$ from $\hatc_0(n)$}

From the received signal model in \eqref{eq:receives yk} for $K=1$, we construct a sufficient statistic for detecting $s(n)$ from $\by(n)$ as follows
\begin{align}
  z(n) &= \frac{\tilbh^H |_{\hatc_0(n)}}{\|\tilbh |_{\hatc_0(n)}\|^2} \by (n) =\frac{(\bh_1 + \bh_2 \hatc_0(n))^H}{\|\bh_1 +\bh_2 \hatc_0(n)\|^2} \by (n).
\end{align}
With $\hatc_0(n)$, the ML estimate $\hats_0 (n)$ is the quantized output of $z(n)$.


If $\hatc_0(n) = c(n)$, we have
 \begin{align}
 z(n) = s(n) + \tilu(n),
\end{align}
with $\tilu(n) \sim \calC \calN(0,\frac{\sigma^2}{\| \bh_1 +\bh_2\|^2})$ for $c(n) = 1$, and $\tilu(n) \sim \calC \calN(0,\frac{\sigma^2}{\| \bh_1 -\bh_2\|^2})$ for $c(n) = -1$, which gives the following BER of $s(n)$
\begin{align}
  a_1 (\bH) \!= \! \frac{1}{2} Q \left( \! \! {\frac{ \| \bh_1 \!+\! \bh_2 \|}{\sigma}} \! \! \right) \!+\!  \frac{1}{2} Q \left( \! \! {\frac{ \| \bh_1 \!-\! \bh_2 \|}{\sigma}} \! \! \right).  \label{eq:aa1}
  \end{align}

Next, we consider the case that $c(n)=1$ and $\hatc_0(n)=-1$. In this case, we have
\begin{align}
  z_1(n) = \theta_1 s(n) + \tilu_1(n),
\end{align}
where $\theta_1=\frac{(\bh_1-\bh_2)^H (\bh_1+\bh_2)}{\| \bh_1 -\bh_2\|^2}$ and $\tilu_1(n) \sim \calC \calN(0,\frac{\sigma^2}{\| \bh_1 -\bh_2\|^2})$.
Let $j=\sqrt{-1}$, and denote $\theta_1 = \theta_{R,1} + \theta_{I, 1} j$, $s(n)=s_R(n) + s_I(n) j$, $\tilu_1(n)=\tilu_{R,1}(n) + \tilu_{I,1}(n) j$, and $z_1(n)=z_{R, 1}(n) + z_{I,1}(n) j$. Denote the ML estimation $\hats_0(n)$ as $\hats_0(n)=\hats_{0, R}(n) + \hats_{0, I}(n) j$. Note that $z_{R, 1}(n) \sim \calN(\theta_{R, 1} s_R - \theta_{I,1} s_I, \frac{\sigma^2}{2\| \bh_1 -\bh_2\|^2})$, and $z_{I,1}(n) \sim \calN(\theta_{R,1} s_I + \theta_{I, 1} s_R, \frac{\sigma^2}{2 \| \bh_1 -\bh_2\|^2})$. Hence, the BER for $s_R(n)$ is given by~\eqref{eq:BER_Ps_R_Pc} at the top of the next page.
\begin{figure*}[t!]
\begin{align}\label{eq:BER_Ps_R_Pc}
  P_{{\sf s}, R,1} &= \Pr\left(s_R = \frac{\sqrt{2}}{2}\right) \Pr\left(s_I = \frac{\sqrt{2}}{2}\right) \Pr\left( \frac{\sqrt{2}}{2} \theta_{R,1} - \frac{\sqrt{2}}{2} \theta_{I,1} +\tilu_{R,1} < 0 \right) +... \nonumber \\
  &\qquad \Pr\left(s_R = \frac{\sqrt{2}}{2}\right) \Pr\left(s_I = -\frac{\sqrt{2}}{2}\right) \Pr\left( \frac{\sqrt{2}}{2} \theta_{R,1} + \frac{\sqrt{2}}{2} \theta_{I,1} +\tilu_{R,1} < 0 \right)+... \nonumber \\
  &\qquad \Pr\left(s_R = -\frac{\sqrt{2}}{2}\right) \Pr\left(s_I = \frac{\sqrt{2}}{2}\right) \Pr\left( -\frac{\sqrt{2}}{2} \theta_{R,1} - \frac{\sqrt{2}}{2} \theta_{I,1} +\tilu_{R,1} > 0 \right) +... \nonumber \\
  &\qquad \Pr\left(s_R = -\frac{\sqrt{2}}{2}\right) \Pr\left(s_I = -\frac{\sqrt{2}}{2}\right) \Pr\left( -\frac{\sqrt{2}}{2} \theta_{R,1} + \frac{\sqrt{2}}{2} \theta_{I,1} +\tilu_{R,1} > 0 \right) \nonumber \\
  &=\frac{1}{2}\calQ \left( \frac{\sqrt{ \| \bh_1 -\bh_2\|^2}(\theta_{R,1} + \theta_{I,1})}{\sigma}\right) +\frac{1}{2}\calQ \left( \frac{\sqrt{\| \bh_1 -\bh_2\|^2}(\theta_{R,1} - \theta_{I,1})}{\sigma}\right).
\end{align}
\end{figure*}


It can be checked that the BER for $s_{I,1}(n)$ is the same as that for $s_{R,1}(n)$. Thus the error rate of $s(n)$ for the case $c(n) = 1$ and $\hat{c}_0(n) = -1$ is given by
\begin{align}\label{eq:BER_Ps_1}
  P_{{\sf e, s}, 1} &= \frac{1}{2}\calQ \left( \frac{\| \bh_1 -\bh_2\| (\theta_{R£¬1} + \theta_{I£¬1})}{\sigma}\right) +... \nonumber \\
  &\quad \frac{1}{2}\calQ \left( \frac{\| \bh_1 -\bh_2\|(\theta_{R£¬1} - \theta_{I£¬1})}{\sigma}\right).
\end{align}

Similarly, for the case that $c(n) = -1$ and $\hat{c}_0(n) = 1$, the error rate for $s(n)$ is given by
\begin{align}\label{eq:BER_Ps_2}
  P_{{\sf e, s}, 2} &= \frac{1}{2}\calQ \left( \frac{\| \bh_1 +\bh_2\|(\theta_{R,2} + \theta_{I,2})}{\sigma}\right) +...\nonumber \\
  &\quad \frac{1}{2}\calQ \left( \frac{\| \bh_1 +\bh_2\|(\theta_{R,2} - \theta_{I,2})}{\sigma}\right),
\end{align}
where
$\theta_{R,2} = \textrm{Re} \left\{ \frac{(\bh_1+\bh_2)^H (\bh_1-\bh_2)}{\| \bh_1+\bh_2\|^2} \right\}$ and $\theta_{I,2} = \textrm{Im} \left( \frac{(\bh_1+\bh_2)^H (\bh_1-\bh_2)}{\| \bh_1+\bh_2\|^2} \right)$.

From~\eqref{eq:BER_Ps_1} and~\eqref{eq:BER_Ps_2}, for the case of $\hatc_0(n) \neq c(n)$, the BER, $a_2(\bH)$, of $s(n)$ is thus given in~\eqref{eq:a2}.

Considering both cases of $\hatc_0(n) = c(n)$ and $\hatc_0(n) \neq c(n)$, the BER of $s(n)$ is thus
\begin{align}\label{eq:BER_Ps_Pc}
  P_{\sf e, s} (\bH) &= (1-P_{\sf e, c} (\bH) ) a_1 (\bH) + P_{\sf e, c} (\bH) a_2 (\bH).
\end{align}

\subsubsection{Estimating $\hatc_0(n)$ from $\hats_0(n)$}
With $\hats_0(n)$, the ML estimate of $c(n)$ is as follows
 \begin{align} \label{eq:ML_c}
  \hatc_0 (n) = \underset{c(n) \in \calA_{\sf c}}{\arg \min} \left\| \by (n) - \bh_1 \hats(n) -\bh_2 \hats (n) c(n) \right\|.
\end{align}
Let
\begin{align}
  z(n) =\textrm{Re} \left\{ \frac{\bh_2^H \hats_0^*(n)}{\|\bh_2\|^2} \left( \by-\bh_1 \hats_0(n) \right)\right\}.
\end{align}
The ML detection yields $\hatc_0(n)=1$ if $z(n) >0$, and $\hatc_0(n)=-1$ if $z(n) \leq 0$. We further have
\begin{align}
  z(n) &= \textrm{Re} \left\{ \hats_0^{\ast} (n) s(n) \right\} c(n) +... \nonumber \\
  &\qquad \textrm{Re} \left\{ \frac{\bh_2^H \bh_1}{\|\bh_2\|^2} \left( \hats_0^{\ast}(n) s(n) -1 \right)\right\} + \tilu(n),
\end{align}
where and the noise $\tilu (n)= \textrm{Re} \left\{ \frac{\bh_2^H s^H(n) \bu}{\|\bh_2\|^2}\right\}$, and $\tilu (n) \sim \calN(0, \frac{\sigma^2}{2 \| \bh_2\|^2})$.

If $\hats_0(n)=s(n)$, we have $z(n)=c(n)+ \tilu(n)$. The equivalent SNR of estimating $c(n)$ from $z(n)$ is $\frac{2  \| \bh_2\|^2}{\sigma^2}$, which gives the BER, $b_1(\bH)$, of $c(n)$ shown in \eqref{eq:b1}.


If $\hats_0(n) \neq s(n)$, the product $\hats_0^{\ast} (n) s(n)$ takes the value of $j, \ -j$ and $-1$, termed as case $i =1, 2, 3$, with probability of $\sqrt{2}-1$, $\sqrt{2}-1$ and $(\sqrt{2}-1)^2$, respectively. Clearly, for each case $i$, the $z(n)$ follows Gaussian distribution with variance $\frac{\sigma^2}{2 \| \bh_2\|^2}$ and mean $\mu_i$ which is given by
\begin{align}
  \mu_1 &= -\textrm{Im} \left\{ \frac{\bh_2^H \bh_1}{\|\bh_2\|^2}\right\}-\textrm{Re} \left\{ \frac{\bh_2^H \bh_1}{\|\bh_2\|^2}\right\}, \nonumber \\
  \mu_2 &= \textrm{Im} \left\{ \frac{\bh_2^H \bh_1}{\|\bh_2\|^2}\right\}-\textrm{Re} \left\{ \frac{\bh_2^H \bh_1}{\|\bh_2\|^2}\right\}, \nonumber \\
  \mu_3 &= -c(n) - 2 \textrm{Re} \left\{ \frac{\bh_2^H \bh_1}{\|\bh_2\|^2}\right\}. \nonumber
\end{align}

We assume that QPSK modulation adopts typical Gray mapping. For case 1 and 2, only one bit of each symbol is decoded wrongly, while for case 3, both bits are decoded wrongly. It is noted that the means for case 1 and case 2 are independent of $c(n)$, the error rate of $c(n)$ is thus $0.5$, which implies that the error rates for both case 1 and 2 are $0.5 P_{\sf e, s} (\bH) \left(1- P_{\sf e, s} (\bH)\right)$. Moreover, it can be checked that the error rate of $c(n)$ for case 3, denoted by $b_2 (\bH)$, is given in~\eqref{eq:b2}.

Hence, the BER of $c(n)$ can be written in terms of $P_{\sf e, s} (\bH)$, as
\begin{align}\label{eq:BER_Pc_Ps}
  P_{\sf e, c} (\bH) &= (1- P_{\sf e, s} (\bH))^2 b_1 (\bH) +... \nonumber \\
  &\quad P_{\sf e, s} (\bH) \left(1- P_{\sf e, s} (\bH)\right) + P_{\sf e, s}^2 (\bH)  b_2 (\bH).
\end{align}

From~\eqref{eq:BER_Ps_Pc} and~\eqref{eq:BER_Pc_Ps}, we have the quadratic function
\begin{align}
  C_2(\bH) P_{\sf e, s}^2(\bH) + C_1(\bH) P_{\sf e, s}(\bH) +C_1(\bH)=0.
\end{align}
From the fact that $0 \leq C_0(\bH) <1, \ C_1(\bH) < 0, \ C_2(\bH)<0$, the obtained the BER of $s(n)$ in~\eqref{eq:BER_s_analysis}, and thus the BER of $c(n)$ in~\eqref{eq:BER_t_analysis}. This completes the proof.
\end{proof}

\section{Proof of Proposition~\ref{theorem:BERMRC}}\label{proof_Thm_MRC}
\begin{proof}
{\em 1) BER of $s(n)$}: With MRC detection, the estimated RF-source signal is given as
\begin{align}
  \tils(n) &=\frac{\bh_1^H}{\| \bh_1\|^2} \by = s(n) + \frac{\bh_1^H \bh_2}{\| \bh_1\|^2} s(n) c(n) + \tilu(n),
\end{align}
where $\tilu(n)=\frac{\bh_1^H \bu(n)}{\| \bh_1\|^2}$, and $\tilbu(n) \sim \calC\calN\left(0, \frac{\sigma^2}{\|\bh_1\|^2}\right)$. Denote $\tils(n)=\tils_{R}(n) + \tils_{I}(n) j$ and $\tilu(n)=\tilu_{R}(n) + \tilu_{I}(n) j$. We have
\begin{align}
  &\tils_R(n) =  \\
  &s_R(n) \!+\! \frac{\text{Re} \{\bh_1^H \bh_2\} s_R(n) \!-\! \text{Im} \{\bh_1^H \bh_2\} s_I(n)}{\| \bh_1\|^2} c(n) \!+\! \tilu_R(n), \nonumber
\end{align}
where $\tilbu_R(n) \sim \calC\calN\left(0, \frac{\sigma^2}{2 \|\bh_1\|^2}\right)$.

We first consider four cases that $s_R(n)=\frac{1}{\sqrt{2}}$, $s_I(n)=\pm \frac{1}{\sqrt{2}}$ and $c(n)=\pm 1$, each case with probability of $\frac{1}{8}$. For each case, the $\tils_R(n)$ follows Gaussian distribution with variance $\frac{\sigma^2}{2 \| \bh_1\|^2}$ and mean $\mu_i$ which is given by
we have that
\begin{align}
  \mu_1 &= \frac{1}{\sqrt{2}} \left( 1+ \frac{\textrm{Re} \left\{\bh_1^H\bh_2\right\}}{\| \bh_1\|^2}-\frac{\textrm{Im} \left\{\bh_1^H\bh_2\right\}}{\| \bh_1\|^2}\right), \quad \nonumber \\
  &\qquad \text{for} \ s_I(n)= \frac{1}{\sqrt{2}}, \ c(n)=1 \nonumber \\
    \mu_2 &= \frac{1}{\sqrt{2}} \left( 1+ \frac{\textrm{Re} \left\{\bh_1^H\bh_2\right\}}{\| \bh_1\|^2}+\frac{\textrm{Im} \left\{\bh_1^H\bh_2\right\}}{\| \bh_1\|^2}\right), \quad \nonumber \\
  &\qquad \text{for} \ s_I(n)= -\frac{1}{\sqrt{2}}, \ c(n)=1  \nonumber \\
      \mu_3 &= \frac{1}{\sqrt{2}} \left( 1- \frac{\textrm{Re} \left\{\bh_1^H\bh_2\right\}}{\| \bh_1\|^2} + \frac{\textrm{Im} \left\{\bh_1^H\bh_2\right\}}{\| \bh_1\|^2}\right), \quad \nonumber \\
  &\qquad \text{for} \ s_I(n)= \frac{1}{\sqrt{2}}, \ c(n)=-1  \nonumber \\
        \mu_4 &= \frac{1}{\sqrt{2}} \left( 1- \frac{\textrm{Re} \left\{\bh_1^H\bh_2\right\}}{\| \bh_1\|^2}-\frac{\textrm{Im} \left\{\bh_1^H\bh_2\right\}}{\| \bh_1\|^2}\right), \quad \nonumber \\
  &\qquad \text{for} \ s_I(n)= -\frac{1}{\sqrt{2}}, \ c(n)=-1 .
\end{align}
The error probability is thus given in~\eqref{eq:ber_real_mrc} at the top of the next page.
\begin{figure*}
\begin{align}
  &\Pr \Big(\hats_R(n)=-\frac{1}{\sqrt{2}}, s_R(n)=  \frac{1}{\sqrt{2}}\Big) \nonumber \\
  &=\frac{1}{8} \! \left[ \!\Pr \! \left( \! \tils_R(n) \! < \! 0 \Big|  s_R(n) \!=\! \frac{1}{\sqrt{2}}, s_I(n) \! =\!  \frac{1}{\sqrt{2}}, c(n) \!=\! 1\right) \!+\!
  \Pr \! \left( \! \tils_R(n) \!<\! 0 \Big|  s_R(n) \!=\! \frac{1}{\sqrt{2}}, s_I(n) \!=\!  -\frac{1}{\sqrt{2}}, c(n)\! =\!1 \!\right) \! \right] \!+\!... \nonumber \\
  &\quad \frac{1}{8} \! \left[ \! \Pr \! \left( \! \tils_R(n) \! <\! 0 \Big|  s_R(n) \!=\! \frac{1}{\sqrt{2}}, s_I(n) \!=\! \frac{1}{\sqrt{2}}, c(n) \!=\! -1 \! \right) \!+\! \Pr \! \left(\! \tils_R(n) \!<\! 0 \Big|  s_R(n) \!=\! \frac{1}{\sqrt{2}}, s_I(n) \!=\! -\frac{1}{\sqrt{2}}, c(n) \!=\! -1\! \right) \! \right]\nonumber \\
&= \frac{1}{8} \left[ \calQ \left( \frac{\| \bh_1 \|}{\sigma} \left( 1+ \frac{\textrm{Re} \left\{\bh_1^H\bh_2\right\}-\textrm{Im} \left\{\bh_1^H\bh_2\right\}}{\| \bh_1\|^2}\right)\right)+\calQ \left( \frac{\| \bh_1 \|}{\sigma} \left( 1+ \frac{\textrm{Re} \left\{\bh_1^H\bh_2\right\}+\textrm{Im} \left\{\bh_1^H\bh_2\right\}}{\| \bh_1\|^2}\right)\right) \right] +... \nonumber \\
  &\quad \frac{1}{8} \left[\calQ \left( \frac{\| \bh_1 \|}{\sigma} \left( 1- \frac{\textrm{Re} \left\{\bh_1^H\bh_2\right\}-\textrm{Im} \left\{\bh_1^H\bh_2\right\}}{\| \bh_1\|^2}\right)\right)+ \calQ \left( \frac{\| \bh_1 \|}{\sigma} \left( 1- \frac{\textrm{Re} \left\{\bh_1^H\bh_2\right\}+\textrm{Im} \left\{\bh_1^H\bh_2\right\}}{\| \bh_1\|^2}\right)\right) \right].\label{eq:ber_real_mrc}
\end{align}
\end{figure*}

Then for the four cases that $s_R(n)=-\frac{1}{\sqrt{2}}$, $s_I(n)=\pm \frac{1}{\sqrt{2}}$ and $c(n)=\pm 1$, each case with probability of $\frac{1}{8}$, it can be checked that the error probability $\Pr \left(\hats_R(n)=\frac{1}{\sqrt{2}}, \ s_R(n)=-\frac{1}{\sqrt{2}}\right)$ is equal to $\Pr \left(\hats_R(n)=-\frac{1}{\sqrt{2}}, \ s_R(n)=\frac{1}{\sqrt{2}}\right)$ which is given in~\eqref{eq:ber_real_mrc}. Hence, the error rate expression of the real part $s_R(n)$ is the same as the right-hand-side of \eqref{eq:BER_s_analysis_mrc}. From the symmetry property of real part and imaginary part of QPSK modulation, the BER of $s(n)$ is obtained in \eqref{eq:BER_s_analysis_mrc}.

{\em 2) BER of $c(n)$}: With MRC detection, the estimated backscattered signal is written as follows
\begin{align}
  \tilx_2(n) &\!=\!\frac{\bh_2^H}{\| \bh_2\|^2} \by \!=\! s(n) c(n) \!+\! \frac{\bh_2^H \bh_1}{\| \bh_2\|^2} s(n) \!+\! \tilu(n),
\end{align}
where $\tilu(n)=\frac{\bh_2^H \bu(n)}{\| \bh_2\|^2}$, and $\tilu(n) \sim \calC\calN\left(0, \frac{\sigma^2}{\|\bh_2\|^2}\right)$. Define the following signal
\begin{align}
  z(n) &= \text{Re} \left\{ \frac{\tilx_2(n)}{\hats(n)}\right\} \\
  &= \text{Re} \left\{ \frac{s(n)}{\hats(n)}\right\} c(n) + \text{Re} \left\{ \frac{\bh_2^H \bh_1}{\| \bh_2\|^2}  \frac{s(n) }{\hats(n)}\right\} +\tilu^{\sharp}(n),\nonumber
\end{align}
where $\tilu^{\sharp}(n)=\frac{\tilu(n)}{\hats(n)}$, and $\tilu^{\sharp}(n) \sim \calC\calN\left(0, \frac{\sigma^2}{\|\bh_2\|^2}\right)$. Denote $\tilu^{\sharp}(n)=\tilu^{\sharp}_R(n)+j\tilu^{\sharp}_I(n)$.

For the case $\hats(n)=s(n)$, we have $z(n)=c(n)+\text{Re} \left\{ \frac{\bh_2^H \bh_1}{\| \bh_2\|^2}\right\}+\tilu^{\sharp}_R(n)$. Following similar steps as in the proof of Theorem 1, we obtain the BER of $c(n)$ for the case $\hats(n)=s(n)$ as follows
\begin{align}
  &\Pr \left(\hatc(n) \neq c(n), \hats(n) = s(n)\right) \label{eq:ber_c1_mrc}\\
  &= \left(1 \!-\!  P_{\sf e, s} (\bH)\right)^2 \! \Big[ \! \Pr(c(n) \!=\! 1) \Pr(z(n)\!<\!0 | c(n)\!=\!1) \!+\!... \nonumber \\
  &\quad \Pr(c(n)=-1) \Pr(z(n)>0 | c(n)=-1)\Big] \nonumber \\
  &\!=\! \left(1 \!-\! P_{\sf e, s} (\bH)\right)^2 \! \Bigg[ \! \frac{1}{2} \calQ \! \left( \! \frac{\sqrt{2} \| \bh_2 \|}{\sigma} \left(\! 1 \!+\! \frac{\textrm{Re} \left\{\bh_2^H\bh_1\right\}}{\| \bh_2\|^2} \! \right) \! \right)\!+\!... \nonumber \\
  &\quad \frac{1}{2} \calQ \left( \frac{\sqrt{2} \| \bh_2 \|}{\sigma} \left( 1- \frac{\textrm{Re} \left\{\bh_2^H\bh_1\right\}}{\| \bh_2\|^2}\right)\right) \! \Bigg].\nonumber
\end{align}

If $\hats(n) \neq s(n)$, the fraction $\frac{s(n)}{\hats (n)}$ takes the value of $j, \ -j$ and $-1$, termed as case $i =1, 2, 3$, with probability of $\sqrt{2}-1$, $\sqrt{2}-1$ and $(\sqrt{2}-1)^2$, respectively. Clearly, for each case $i$, the $z(n)$ follows Gaussian distribution with variance $\frac{\sigma^2}{2 \| \bh_2\|^2}$ and mean $\mu_1 = -\textrm{Im} \left\{ \frac{\bh_2^H \bh_1}{\|\bh_2\|^2}\right\}$, $\mu_2 = \textrm{Im} \left\{ \frac{\bh_2^H \bh_1}{\|\bh_2\|^2}\right\}$, $\mu_3 = -c(n) - \textrm{Re} \left\{ \frac{\bh_2^H \bh_1}{\|\bh_2\|^2}\right\}$. Following similar steps as in the proof of Theorem 1, we obtain the BER of $c(n)$ for the case $\hats(n) \neq s(n)$ as follows
\begin{align}
  &\Pr \left(\hatc(n) \neq c(n), \hats(n) \neq s(n)\right) \nonumber \\
  &= \frac{P_{\sf e, s}^2 (\bH)}{2}\calQ \left( \frac{\sqrt{2} \| \bh_2 \|}{\sigma} \left( -1- \frac{\textrm{Re} \left\{\bh_2^H\bh_1\right\}}{\| \bh_2\|^2}\right)\right)+...\nonumber \\
  &\quad \frac{P_{\sf e, s}^2 (\bH)}{2} \calQ \left( \frac{\sqrt{2} \| \bh_2 \|}{\sigma} \left( -1+ \frac{\textrm{Re} \left\{\bh_2^H\bh_1\right\}}{\| \bh_2\|^2}\right)\right)+ ...\nonumber \\
  & \quad P_{\sf e, s} (\bH)\left(1-P_{\sf e, s} (\bH)\right).  \label{eq:ber_c2_mrc}
\end{align}
Thus, from~\eqref{eq:ber_c1_mrc} and~\eqref{eq:ber_c2_mrc}, we obtain the BER of $c(n)$ in \eqref{eq:BER_t_analysis_mrc}. This completes the proof.
\end{proof}

\section{Proof of Proposition~\ref{theorem:BERZF}}\label{proof_Thm_ZF}
\begin{proof}
With ZF detection, the estimated signal vector is written as follows
\begin{align}
  \tilbx(n) &= \left(\bH^H \bH \right)^{-1} \bH^H \by = \bx(n) + \tilbu(n),
\end{align}
where $\tilbu(n)=\left(\bH^H \bH \right)^{-1} \bH^H \bu(n)$, and $\tilbu(n) \sim \calC\calN\left(\bm{0}, \sigma^2 \bA\right)$.

Since $\tils(n)=\tilx_1(n)=s(n)+\tilu_1(n)$, the detecting SNR for $s(n)$ is thus $\sigma^2 A_{11}(\bH)$. Hence, the BER of $s(n)$ is given in~\eqref{eq:BER_s_analysis_zf}.

The estimated backscattered signal is written as follows
\begin{align}
  \tilx_2(n) &= s(n) c(n) + \tilu_2(n),
\end{align}
where $\tilu_2(n) \sim \calC\calN\left(0, \sigma^2 A_{22}(\bH)\right)$. Define the following signal
\begin{align}
  z(n) = \text{Re} \left\{ \frac{\tilx_2(n)}{\hats(n)}\right\} = \text{Re} \left\{ \frac{s(n)}{\hats(n)}\right\} c(n) + \tilu_2^{\sharp}(n),
\end{align}
where $\tilu^{\sharp}_2(n)=\frac{\tilu(n)}{\hats(n)}$, and $\tilu^{\sharp}_2(n) \sim \calC\calN\left(0, \frac{\sigma^2 A_{22}(\bH)}{2}\right)$. Denote $\tilu^{\sharp}_2(n)=\tilu^{\sharp}_{2, R}(n)+j\tilu^{\sharp}_{2,I}(n)$.

For the case $\hats(n)=s(n)$, we have $z(n)=c(n)+\tilu^{\sharp}_{2, R}(n)$. Following similar steps as in the proof of Theorem 1, we obtain the BER of $c(n)$ for the case $\hats(n)=s(n)$ as follows
\begin{align}
  &\Pr \left(\hatc(n) \neq c(n), \hats(n) = s(n)\right) \nonumber \\
  &= \left(1 \!-\! P_{\sf e, s} (\bH)\right)^2 \Pr(c(n) \!=\!1) \Pr(z(n) \!<\! 0 | c(n) \!=\! 1) \!+\! ...\nonumber \\
  &\quad  \left(1 \!-\! P_{\sf e, s} (\bH)\right)^2 \Pr(c(n)\!=\!1) \Pr(z(n)\!>\!0 | c(n)\!=\!-1) \nonumber \\
  &= \left(1-P_{\sf e, s} (\bH)\right)^2 \calQ \left( \frac{\sqrt{2}}{\sigma \sqrt{A_{22}(\bH)}}\right). \label{eq:ber_c1_zf}
\end{align}

If $\hats(n) \neq s(n)$, the fraction $\frac{s(n)}{\hats (n)}$ takes the value of $j, \ -j$ and $-1$, termed as case $i =1, 2, 3$, with probability of $\sqrt{2}-1$, $\sqrt{2}-1$ and $(\sqrt{2}-1)^2$, respectively. Clearly, for each case $i$, the $z(n)$ follows Gaussian distribution with variance $\frac{\sigma^2}{2 \| \bh_2\|^2}$ and mean $\mu_1 = \mu_2 = 0$, and $\mu_3 = -c(n)$. Following similar steps as in Appendix~\ref{proof_Thm_ML}, we obtain the BER of $c(n)$ for the case $\hats(n) \neq s(n)$ as follows 
\begin{align}
  &\Pr \left(\hatc(n) \neq c(n), \hats(n) \neq s(n)\right)\label{eq:ber_c2_zf} \\
  &= P_{\sf e, s} (\bH) \left(1-P_{\sf e, s} (\bH)\right)+P_{\sf e, s}^2 (\bH) \calQ \left( - \frac{\sqrt{2}}{\sigma \sqrt{A_{22}(\bH)}}\right).\nonumber
\end{align}
Thus, from~\eqref{eq:ber_c1_zf} and~\eqref{eq:ber_c2_zf}, we obtain the BER of $c(n)$ in \eqref{eq:BER_t_analysis_zf}. This completes the proof.
\end{proof}

\section{Proof of Proposition~\ref{theorem:BERMMSE}}\label{proof_Thm_MMSE}
\begin{proof}
With MMSE detection, the estimated signal vector is $\tilbx(n) = \left(\bH^H \bH + \sigma^2 \bI\right)^{-1} \bH^H \by$. Define  $\gamma_1(\bH)=\bh_1^H \left( \bh_2 \bh_2^H + \sigma^2 \bI \right)^{-1} \bh_1$. The estimated RF-source signal is written as follows~\cite{TseFWC2005}
\begin{align}
  \tils(n) &= \frac{\gamma_1(\bH)}{1+\gamma_1(\bH)} s(n) + \tilu_1(n),
\end{align}
where $\tilu_1(n) \sim \calC \calN \left(0, \frac{\gamma_1(\bH)}{(1+\gamma_1(\bH))^2}\right)$. Thus, the detecting SNR for $s(n)$ is thus $\gamma_1(\bH)$. Hence, the BER of $s(n)$ is given in~\eqref{eq:BER_s_analysis_mmse}.

Define  $\gamma_2(\bH)=\bh_2^H \left( \bh_1 \bh_1^H + \sigma^2 \bI \right)^{-1} \bh_2$.  From~\cite{TseFWC2005}, the estimated backscattered signal is
\begin{align}
  \tilx_2(n) &= s(n) c(n) + \tilu_2(n),
\end{align}
where $\tilu_2(n) \sim \calC\calN\left(0, \frac{\gamma_2(\bH)}{(1+\gamma_2(\bH))^2}\right)$. Denote $\tilu^{\sharp}_2(n)=\frac{\tilu(n)}{\hats(n)}=\tilu^{\sharp}_{2, R}(n)+j\tilu^{\sharp}_{2,I}(n)$, and $\tilu^{\sharp}_2(n) \sim \calC\calN\left(0, \frac{\gamma_2(\bH)}{2 (1+\gamma_2(\bH))^2}\right)$. Define the following signal
\begin{align}
  z(n) \!= \! \text{Re} \left\{ \! \frac{\tilx_2(n)}{\hats(n)} \! \right\} \!=\! \text{Re} \left\{ \! \frac{s(n)}{\hats(n)} \! \right\} c(n)  \!+\! \tilu^{\sharp}_{2, R}(n),
\end{align}
where $\tilu^{\sharp}_{2, R}(n) \sim \calC\calN\left(0, \frac{\gamma_2(\bH)}{2 (1+\gamma_2(\bH))^2}\right)$.

For the case $\hats(n)=s(n)$, we have $z(n)=c(n)+\tilu^{\sharp}_{2, R}(n)$. Following similar steps as in the proof of Theorem 1, we obtain the BER of $c(n)$ for the case $\hats(n)=s(n)$ as follows
\begin{align}
  &\Pr \left(\hatc(n) \neq c(n), \hats(n) = s(n)\right)\nonumber \\
  &\!=\! \left(1\!-\!P_{\sf e, s} (\bH)\right)^2  \! \Pr(c(n)\!=\!1) \! \Pr(z(n)\!<\!0 | c(n)\!=\!1) \!+\!... \nonumber \\
  &\quad \left(1\!-\! P_{\sf e, s} (\bH)\right)^2 \Pr(c(n) \!=\! 1) \Pr(z(n) \!>\! 0 | c(n) \!=\! -1) \nonumber \\
  &= \left(1\!-\!P_{\sf e, s} (\bH)\right)^2 \calQ \left( \sqrt{\bh_2^H \left( \bh_1 \bh_1^H \!+\! \sigma^2 \bI \right)^{-1} \bh_2}\right). \label{eq:ber_c1_mmse}
\end{align}

If $\hats(n) \neq s(n)$, the fraction $\frac{s(n)}{\hats (n)}$ takes the value of $j, \ -j$ and $-1$, termed as case $i =1, 2, 3$, with probability of $\sqrt{2}-1$, $\sqrt{2}-1$ and $(\sqrt{2}-1)^2$, respectively. Clearly, for each case $i$, the $z(n)$ follows Gaussian distribution with variance $\frac{\sigma^2}{2 \| \bh_2\|^2}$ and mean $\mu_1 = \mu_2 = 0$, and $\mu_3 = -c(n)$. Following similar steps as in the proof of Theorem 1, we obtain the BER of $c(n)$ for the case $\hats(n) \neq s(n)$ as follows
\begin{align}
  &\Pr \left(\hatc(n) \! \neq \! c(n), \hats(n) \! \neq \! s(n) \! \right) \!=\! P_{\sf e, s} (\bH) \! \left(1 \!-\! P_{\sf e, s} (\bH)\right)\!+\!...\nonumber \\
  &\quad \quad \quad P_{\sf e, s}^2 (\bH) \calQ \left( \!-\! \sqrt{\bh_2^H \left( \bh_1 \bh_1^H \!+\! \sigma^2 \bI_2 \right)^{-1} \bh_2}\right).  \label{eq:ber_c2_mmse}
\end{align}
Thus, from~\eqref{eq:ber_c1_mmse} and~\eqref{eq:ber_c2_mmse}, we obtain the BER of $c(n)$ in \eqref{eq:BER_t_analysis_mmse}. This completes the proof.
\end{proof}%
\bibliography{IEEEabrv,reference1705}
\bibliographystyle{IEEEtran}
\end{document}